\theoremstyle{definition}
\newtheorem{definition}{Definition}
\theoremstyle{plain}
\newtheorem{theorem}{Theorem}
\newtheorem{lemma}{Lemma}
\newtheorem{corollary}{Corollary}
\newtheorem{proposition}{Proposition}
\theoremstyle{remark}
\newtheorem{remark}{Remark}
\newtheorem*{remark*}{Remark}
\newtheorem*{claim*}{Claim}
\newcommand{\set}{\textsf{Set}}
\newcommand{\ppad}{\textup{PPAD}}
\newcommand{\ppa}{\textup{PPA}}
\newcommand{\degtag}[1]{[\# #1]}
\newcommand{\ppak}[1]{\textup{PPA-$#1$}}
\newcommand{\ppakl}[2]{\ppak{#1\degtag{#2}}}
\newcommand{\ppmod}[1]{\textup{$\text{PMOD}^{#1}$}}
\newcommand{\modd}[1]{$\textup{MOD}^{#1}$}
\newcommand{\bipa}[1]{\textsc{Bipartite-mod-$#1$}}
\newcommand{\bipal}[2]{\bipa{#1\degtag{#2}}}
\newcommand{\imba}[1]{\textsc{Imbalance-mod-$#1$}}
\newcommand{\imbal}[2]{\imba{#1\degtag{#2}}}
\newcommand{\hyper}[1]{\textsc{Hypergraph-mod-$#1$}}
\newcommand{\hyperl}[2]{\hyper{#1\degtag{#2}}}
\newcommand{\group}[1]{\textsc{Partition-mod-$#1$}}
\newcommand{\groupl}[2]{\group{#1\degtag{#2}}}
\newcommand{\pfactors}[1]{\operatorname{\mathsf{PF}}(#1)}
\DeclareMathOperator*{\amper}{\&}
\begin{document}

\title{The Classes PPA-$k$:\\ Existence from Arguments Modulo $k$}

\author{Alexandros Hollender\thanks{A preliminary version of this paper appeared in the Proceedings of the 15th Conference on Web and Internet Economics (WINE 2019).}}

\affil{Department of Computer Science, University of Oxford\\ {\tt alexandros.hollender@cs.ox.ac.uk}}

\date{}

\maketitle

\begin{abstract}
The complexity classes PPA-$k$, $k \geq 2$, have recently emerged as the main candidates for capturing the complexity of important problems in fair division, in particular Alon's \textsc{Necklace-Splitting} problem with $k$ thieves. Indeed, the problem with two thieves has been shown complete for PPA $=$ PPA-$2$. In this work, we present structural results which provide a solid foundation for the further study of these classes. Namely, we investigate the classes PPA-$k$ in terms of (i) equivalent definitions, (ii) inner structure, (iii) relationship to each other and to other TFNP classes, and (iv) closure under Turing reductions.
\end{abstract}

\section{Introduction}

The complexity class TFNP is the class of all search problems such that every instance has a least one solution and any solution can be checked in polynomial time. It has attracted a lot of interest, because, in some sense, it lies between P and NP. Moreover, TFNP contains many natural problems for which no polynomial-time algorithm is known, such as \textsc{Factoring} (given a integer, find a prime factor) or \textsc{Nash} (given a bimatrix game, find a Nash equilibrium). However, no problem in TFNP can be NP-hard, unless NP $=$ co-NP~\cite{megiddo1991total}. Furthermore, it is believed that no TFNP-complete problem exists~\cite{papadimitriou1994complexity,pudlak2015complexity}. Thus, the challenge is to find some way to provide evidence that these TFNP problems are indeed hard.

Papadimitriou~\cite{papadimitriou1994complexity} proposed the following idea: define \emph{subclasses} of TFNP and classify the natural problems of interest with respect to these classes. Proving that many natural problems are complete for such a class, shows that they are ``equally'' hard. Then, investigating how these classes relate to each other, yields a relative classification of all these problems. In other words, it provides a unified framework that gives a better understanding of how these problems relate to each other. TFNP subclasses are based on various non-constructive existence results. Some of these classes and their corresponding existence principle are:
\begin{itemize}
    \item PPAD : given a directed graph and an unbalanced vertex (i.e., out-degree $\neq$ in-degree), there must exist another unbalanced vertex.
    \item PPA : given an undirected graph and vertex with odd degree, there must exist another vertex with odd degree (Handshaking Lemma).
    \item PPP : given a function mapping a finite set to a smaller set, there must exist a collision (Pigeonhole Principle).
\end{itemize}
Other TFNP subclasses are PPADS, PLS~\cite{johnson1988easy}, CLS~\cite{daskalakis2011continuous}, PTFNP~\cite{goldberg2018towards}, EOPL and UEOPL \cite{fearnley2018unique}. It is known that $\ppad \subseteq \textup{PPADS} \subseteq \textup{PPP}$, $\ppad \subseteq \ppa$ and $\textup{UEOPL} \subseteq \textup{EOPL} \subseteq \textup{CLS} \subseteq \ppad \cap \textup{PLS}$. Very recently it was shown that in fact $\textup{CLS} = \ppad \cap \textup{PLS}$~\cite{FGHS21-GD}. Any separation between TFNP subclasses would imply P $\neq$ NP, but various oracle separations exist~\cite{beame1998relative,morioka2001classification,buresh2004relativized,buss2012propositional} (see \cref{sec:prelims} for more details).

TFNP subclasses have been very successful in capturing the complexity of natural problems. The most famous result is that the problem \textsc{Nash} is PPAD-complete~\cite{daskalakis2009complexity,CDT}, but various other natural problems have also been shown PPAD-complete~\cite{CSVY08,CDDT09,CPY17,KPRST13}. Many local optimisation problems have been proved PLS-complete~\cite{johnson1988easy,papadimitriou1990local,krentel1989structure,fabrikant2004pure,dumrauf2009multiprocessor}. Recently, the first natural complete problems were found for PPA~\cite{FRG18-Consensus,FRG19-Necklace} and PPP~\cite{SZZ2018pppcomplete}. The famous \textsc{Factoring} problem has been partially related to PPA and PPP~\cite{J-JCSS16}.

\paragraph*{Necklace-Splitting.}
The natural problem recently shown PPA-complete is a problem in fair division, called the \textsc{$2$-Necklace-Splitting} problem~\cite{FRG19-Necklace}. For $k \geq 2$, the premise of the \textsc{$k$-Necklace-Splitting} problem is as follows. Imagine that $k$ thieves have stolen a necklace that has beads of different colours. Since the thieves are unsure of the value of the different beads, they want to divide the necklace into $k$ parts such that each part contains the same number of beads of each colour. However, the string of the necklace is made of precious metal, so the thieves don't want to use too many cuts. Alon's famous result~\cite{alon1987necklace} says that this can always be achieved with a limited number of cuts.

The corresponding computational problem can be described as follows. We are given an open necklace (i.e., a segment) with $n$ beads of $c$ different colours, i.e., there are $a_i$ beads of colour $i$ and $\sum_{i=1}^c a_i = n$. Furthermore, assume that for each $i$, $a_i$ is divisible by $k$ (the number of thieves). The goal is to cut the necklace in (at most) $c(k-1)$ places and allocate the pieces to the $k$ thieves, such that every thief gets exactly $a_i/k$ beads of colour $i$, for each colour $i$. By Alon's result~\cite{alon1987necklace}, a solution always exists, and thus the problem lies in TFNP.

The complexity of this problem has been an open problem for almost 30 years~\cite{papadimitriou1994complexity}. While the 2-thieves version is now resolved, the complexity of the problem with $k$ thieves ($k \geq 3$) remains open. The main motivation of the present paper is to investigate the classes \ppak{k}, which are believed to be the most likely candidates to capture the complexity of \textsc{$k$-Necklace-Splitting}. Indeed, in the conclusion of the paper where they prove that \textsc{$2$-Necklace-Splitting} is PPA-complete, Filos-Ratsikas and Goldberg~\cite[arXiv version]{FRG19-Necklace} mention:

\begin{quotation}
``What is the computational complexity of $k$-thief \textsc{Necklace-splitting}, for $k$ not a power of 2? As discussed in \cite{meunier2014simplotopal,dLZ2006borsuk}, the proof that it is a total search problem, does \emph{not} seem to boil down to the PPA principle. Right now, we do not even know if it belongs to PTFNP~\cite{goldberg2018towards}.

Interestingly, Papadimitriou in \cite{papadimitriou1994complexity} (implicitly) also defined a number of computational complexity classes related to PPA, namely PPA-$p$, for a parameter $p \geq 2$. [...] Given the discussion above, it could possibly be the case that the principle associated with \textsc{Necklace-Splitting} for $k$-thieves is the PPA-$k$ principle instead.''
\end{quotation}

\paragraph*{$\bm{\ppak{p}}$.}
The TFNP subclasses \ppak{p} were defined by Papadimitriou almost 30 years ago in his seminal paper \cite{papadimitriou1994complexity}. Recall that the existence of a solution to a PPA problem is guaranteed by a parity argument, i.e., an argument modulo $2$. The classes \ppak{p} are a generalisation of this. For every prime $p$, the existence of a solution to a \ppak{p} problem is guaranteed by an argument modulo $p$. In particular, $\ppak{2} = \ppa$. Surprisingly, these classes have received very little attention. As far as we know, they have only been studied in the following:
\begin{itemize}
\item Papadimitriou~\cite{papadimitriou1994complexity} defined the classes \ppak{p} and proved that a problem called \textsc{Chevalley-mod-$p$} lies in \ppak{p} and a problem called \textsc{Cubic-Subgraph} lies in \ppak{3}.
\item In an online thread on Stack Exchange~\cite{jerabekstackexchange}, Je{\v r}{\'a}bek provided two other equivalent ways to define \ppak{3}. The problems and proofs can be generalised to any prime $p$.
\item In his thesis~\cite{johnson2011thesisNPsearch}, Johnson defined the classes \ppmod{k} for any $k \geq 2$, which were intended to capture the complexity of counting arguments modulo $k$. He proved various oracle separation results involving his classes and other TFNP classes. While the \ppak{p} classes are not mentioned by Johnson, using Je{\v r}{\'a}bek's results~\cite{jerabekstackexchange} it is easy to show that $\ppmod{p} = \ppak{p}$ for any prime $p$. In \cref{sec:pmod}, we characterise \ppmod{k} in terms of the classes \ppak{p} when $k$ is not prime. In particular, we show that \ppmod{k} only partially captures existence arguments modulo $k$.
\end{itemize}

\paragraph*{Our contribution.}
In this paper, we use the natural generalisation of Papadimitriou's definition of the classes \ppak{p} to define \ppak{k} for any $k \geq 2$. We then provide a characterisation of \ppak{k} in terms of the classes \ppak{p}. In particular, we show that \ppak{k} is completely determined by the set of prime factors of $k$. In order to gain a better understanding of the inner structure of the class \ppak{k}, we also define new subclasses that we denote \ppakl{k}{\ell} and investigate how they relate to the other classes. We show that \ppakl{k}{\ell} is completely determined by the set of prime factors of $k/\gcd(k,\ell)$.

Furthermore, we provide various equivalent complete problems that can be used to define \ppak{k} and \ppakl{k}{\ell} (\cref{sec:complete-toolbox}). While these problems are not ``natural'', we believe that they provide additional tools that can be very useful when proving that natural problems are complete for these classes. In \cref{sec:turing-reduction}, we provide an additional tool for showing that problems lie in these classes: we prove that \ppak{p^r} ($p$ prime, $r \geq 1$) and \ppakl{k}{\ell} ($k \geq 2$) are closed under Turing reductions. On the other hand, we provide evidence that \ppak{k} might not be closed under Turing reductions when $k$ is not a prime power.

Finally, in \cref{sec:pmod} we investigate the classes \ppmod{k} defined by Johnson~\cite{johnson2011thesisNPsearch} and provide a full characterisation in terms of the classes \ppak{k}. In particular, we show that $\ppmod{k} = \ppak{k}$ if $k$ is a prime power. However, when $k$ is not a prime power, we provide evidence that \ppmod{k} does not capture the full strength of existence arguments modulo $k$, unlike \ppak{k}. This characterisation of \ppmod{k} in terms of \ppak{k} leads to some oracle separation results involving \ppak{k} and other TFNP classes (using Johnson's oracle separation results).

We note that a significant fraction of our results were also obtained by G\"o\"os, Kamath, Sotiraki and Zampetakis in concurrent and independent work~\cite{GKSZ19-modulo-q}. In their work, they have also provided the first ``natural'' complete problem for the classes \ppak{p} (a variant of \textsc{Chevalley-mod-$p$}), namely the first complete problem that does not involve circuits or other computational devices in its description. The present work, and in particular the equivalent characterisations of the classes \ppak{k}, have been pivotal in subsequent work~\cite{FRHSZ21necklace} showing that the \textsc{$k$-Necklace-Splitting} problem lies in \ppak{k} under Turing reductions. However, the question of whether \textsc{$k$-Necklace-Splitting} is also \ppak{k}-hard remains open.

\section{Preliminaries}\label{sec:prelims}

\paragraph*{TFNP.}
Let $\{0,1\}^*$ denote the set of all finite length bit-strings and for $w \in \{0,1\}^*$ let $|w|$ be its length. A computational search problem is given by a binary relation $R \subseteq \{0,1\}^* \times \{0,1\}^*$. The problem is: given an instance $I \in \{0,1\}^*$, find an $s \in \{0,1\}^*$ such that $(I,s) \in R$, or return that no such $s$ exists. The search problem $R$ is in FNP (\emph{Functions in NP}), if $R$ is polynomial-time computable (i.e., $(I,s) \in R$ can be decided in polynomial time in $|I|+|s|$) and there exists some polynomial $p$ such that $(I,s) \in R \implies |s| \leq p(|I|)$.  Thus, FNP is the search problem version of NP (and FNP-complete problems are equivalent to NP-complete problems under Turing reductions).

The class TFNP (\emph{Total Functions in NP}~\cite{megiddo1991total}) contains all FNP search problems $R$ that are \emph{total}: for every $I \in\{0,1\}^*$ there exists $s \in\{0,1\}^*$ such that $(I,s) \in R$. With a slight abuse of notation, we can say that P lies in TFNP. Indeed, if a decision problem is solvable in polynomial time, then both the ``yes'' and ``no'' answers can be verified in polynomial time. In this sense, TFNP lies between P and NP.

Note that TFNP problems are not promise problems, i.e., we are not allowed to restrict the instance space $\{0,1\}^*$. This means that for any instance in $\{0,1\}^*$, there must always exist at least one solution. Nevertheless, TFNP can indirectly capture various settings where the instance space is restricted. For example, if a problem $R$ in FNP is total only on a subset $L$ of the instances and $L \in P$, then we can transform it into an equivalent TFNP problem by adding $(I,0)$ to $R$ for all instances $I \notin L$.

\paragraph*{Reductions.} Let $R$ and $S$ be total search problems in TFNP. We say that $R$ (many-one) reduces to $S$, denoted $R \leq S$, if there exist polynomial-time computable functions $f,g$ such that
$$(f(I),s) \in S \implies (I,g(I,s)) \in R.$$
Note that if $S$ is polynomial-time solvable, then so is $R$. We say that two problems $R$ and $S$ are (polynomial-time) equivalent, if $R \leq S$ and $S \leq R$.

There is also a more general type of reduction. A Turing reduction from $R$ to $S$ is a polynomial-time oracle Turing machine that solves problem $R$ with the help of queries to an oracle for $S$. Note that a Turing reduction that only makes a single oracle query immediately yields a many-one reduction.

\paragraph*{Encoding of Sets.} Many of the computational problems we consider in this paper involve Boolean circuits whose output is interpreted as a set. For example, it will often be the case that a circuit $C$ takes as input a bit-string in $\{0,1\}^n$ and outputs a set of at most $m$ bit-strings in $\{0,1\}^n$. We will denote this by $C: \{0,1\}^n \to \set_{\leq m}(\{0,1\}^n)$. Of course, a Boolean circuit has a fixed number of output bits and so the circuit will in fact be of the form $C: \{0,1\}^n \to \{0,1\}^t$, for some $t$ that is sufficiently large so that there are enough bits to encode any set of size at most $m$. It is easy to see that taking $t = mn + 1$ is enough. Indeed, we can for example use the following encoding: the set $\{x_1,\dots,x_\ell\} \subseteq \{0,1\}^n$, $\ell \leq m$, is represented by the bit-string $x_1 \cdots x_\ell 1 0^{(m-\ell)n} \in \{0,1\}^{mn+1}$. Clearly, we can efficiently check whether a bit-string in $\{0,1\}^{mn+1}$ is a valid representation of a set, and if not, we can just interpret it as the empty set $\emptyset \subset \{0,1\}^n$.

\paragraph*{PPA.} The class PPA (Polynomial Parity Argument)~\cite{papadimitriou1994complexity} is defined as the set of all TFNP problems that many-one reduce to the problem \textsc{Leaf}~\cite{papadimitriou1994complexity,beame1998relative}: given an undirected graph with maximum degree 2 and a leaf (i.e., a vertex of degree 1), find another leaf. The important thing to note is that the graph is not given explicitly (in which case the problem would be very easy), but it is provided implicitly through a succinct representation.

The vertex set is $\{0,1\}^n$ and the undirected graph is represented by a Boolean circuit $C: \{0,1\}^n \to \set_{\leq 2}(\{0,1\}^n)$. By this we mean that for any $x \in \{0,1\}^n$, we interpret $C(x)$ as the set of potential neighbours of $x$, where we syntactically enforce that $x \notin C(x)$. We say that there is an edge between $x$ and $y$ if $x \in C(y)$ and $y \in C(x)$. Thus, every vertex has at most two neighbours. Note that the size of the graph can be exponential with respect to its description size.

The full formal definition of the problem \textsc{Leaf} is: given a Boolean circuit $C: \{0,1\}^n \to \set_{\leq 2}(\{0,1\}^n)$ representing an undirected graph on the vertex set $\{0,1\}^n$ such that $|C(0^n)|=1$ (i.e., $0^n$ is a leaf), find
\begin{itemize}
    \item $x \neq 0^n$ such that $|C(x)| = 1$ (another leaf)
    \item or $x,y$ such that $x \in C(y)$ but $y \notin C(x)$ (an inconsistent edge)
\end{itemize}

\paragraph*{Type 2 Problems and Oracle Separations.} We work in the standard Turing machine model, but TFNP subclasses have also been studied in the black-box model. In this model, one considers the type 2 versions of the problems, namely, the circuits in the input are replaced by black-boxes. In that case, it is possible to prove unconditional separations between type 2 TFNP subclasses (in the standard model this would imply P $\neq$ NP). The interesting point here is that separations between type 2 classes yield separations of the corresponding classes in the standard model with respect to any generic oracle (see~\cite{beame1998relative} for more details on this). This technique has been used to prove various oracle separations between TFNP subclasses~\cite{beame1998relative,morioka2001classification,buresh2004relativized,buss2012propositional}. In \cref{sec:pmod} we provide some oracle separations involving \ppak{k} and other TFNP subclasses.

On the other hand, any reduction that works in the type 2 setting, also works in the standard setting. Indeed, it suffices to replace the calls to the black boxes by the corresponding circuits that compute them. In this paper, our reductions are stated in the standard model, but they also work in the type 2 setting, because they don't examine the inner workings of the circuits.

\section{Definition of the Classes}

\subsection{\texorpdfstring{$\bm{\ppak{k}}$}{PPA-k} : Polynomial Argument modulo \texorpdfstring{$\bm{k}$}{k}}

For any prime $p$, Papadimitriou~\cite{papadimitriou1994complexity} defined the class \ppak{p} as the set of all TFNP problems that many-one reduce to the following problem, that we call \bipa{p}: We are given an undirected bipartite graph (implicitly represented by a circuit) and a vertex with degree $\neq 0 \mod p$ (which we call the \emph{trivial solution}). The goal is to find another such vertex. This problem lies in TFNP: if all other vertices had degree $= 0 \mod p$, then the sum of the degrees of all vertices on each side would have a different value modulo $p$, which is impossible.

The problem remains well-defined and total if $p$ is not a prime, and so we will instead define it for any $k \geq 2$. Let us now provide a formal definition of the problem. A vertex of the bipartite graph is represented as a bit-string in $\{0,1\} \times \{0,1\}^n$, where the first bit indicates whether the vertex lies on the ``left'' or ``right'' side of the bipartite graph. The graph will be represented by a Boolean circuit that outputs a set of potential neighbours, just as we did for \textsc{Leaf}. Instead of at most two neighbours, here we allow at most $k$ neighbours (see \cref{rem:degin0k} for why this is enough). Note that we can syntactically enforce that the graph is bipartite, i.e., that a vertex $0x$ can only have neighbours of the type $1y$ and vice versa.

\begin{definition}[\bipa{k} \cite{papadimitriou1994complexity}]\label{def:bipartite-k}
Let $k \geq 2$. The problem \bipa{k} is defined as: given a Boolean circuit $C: \{0,1\} \times \{0,1\}^n \to \set_{\leq k}(\{0,1\} \times \{0,1\}^n)$ representing a bipartite graph on the vertex set $(\{0\} \times \{0,1\}^n, \{1\} \times \{0,1\}^n)$ with $|C(00^n)| \in \{1,\dots, k-1\}$, find
\begin{itemize}
    \item $x \neq 00^n$ such that $|C(x)| \notin \{0,k\}$
    \item or $x,y$ such that $y \in C(x)$ but $x \notin C(y)$.
\end{itemize}
\end{definition}
Here the trivial solution is the vertex $00^n$. The first type of solution corresponds to a vertex with degree $\neq 0 \mod k$. The second type of solution corresponds to an edge that is not well-defined. We can always ensure that all edges are well-defined by doing some pre-processing. Indeed, in polynomial time we can construct a circuit $C'$ such that all solutions are of the first type and yield a solution for $C$. On input $0x$ the circuit $C'$ first computes $C(0x) = \{1y_1, \dots, 1y_m\}$ and then for each $i$ removes $1y_i$ from this set, if $0x \notin C(1y_i)$.

\begin{remark}\label{rem:degin0k}
Note that in this problem statement we require that all degrees lie in $\{0,1,\dots, k\}$. This is easily seen to be equivalent to the more general formulation where vertices can have more than $k$ neighbours. Indeed, any vertex that has more than $k$ edges can be split into multiple copies such that all the copies have $0$ or $k$ edges, except for one copy which is allowed to have any number of edges in $\{0,1,\dots, k\}$. A solution of the original instance is then easily recovered from a solution of this modified instance. Note that since the set of neighbours is given as the output of a circuit, it will have length bounded by some polynomial in the input size and so this argument can indeed be applied.
\end{remark}

\begin{definition}[\ppak{k} \cite{papadimitriou1994complexity}]
For any $k \geq 2$, the class \ppak{k} is defined as the set of all \textup{TFNP} problems that many-one reduce to \bipa{k}.
\end{definition}

As a warm-up let us show the following:

\begin{proposition}[\cite{papadimitriou1994complexity}]
\ppak{2} $=$ \textup{PPA}
\end{proposition}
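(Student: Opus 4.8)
The plan is to show mutual many-one reducibility between \bipa{2} and \textsc{Leaf}, since by definition $\ppak{2}$ is the class of problems reducing to \bipa{2}, and $\ppa$ is the class of problems reducing to \textsc{Leaf}. The easy direction is $\textsc{Leaf} \leq \bipa{2}$: a \textsc{Leaf} instance is a graph of maximum degree $2$ with a distinguished leaf, and ``leaf'' means exactly ``degree $\neq 0 \bmod 2$'' among the degrees $\{0,1,2\}$; so we only need to make the graph bipartite. The standard trick is to take two copies of the vertex set — a ``left'' copy and a ``right'' copy — and replace each edge $\{x,y\}$ of the original graph by the two edges $\{0x,1y\}$ and $\{0y,1x\}$. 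Then the degree of $0x$ (and of $1x$) in the bipartite graph equals the degree of $x$ in the original graph, so the parity is preserved; the trivial leaf $0^n$ of \textsc{Leaf} maps to the trivial solution $00^n$ of \bipa{2}, and any other odd-degree vertex $bx$ gives back an odd-degree vertex $x \neq 0^n$. Inconsistent-edge solutions of \bipa{2} translate back to inconsistent-edge solutions of \textsc{Leaf}. This is all polynomial-time computable from the circuit.

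For the reverse direction $\bipa{2} \leq \textsc{Leaf}$, I first invoke the pre-processing remark after \cref{def:bipartite-k} to assume all edges of the \bipa{2} instance are well-defined, so every solution is a vertex of odd degree other than $00^n$. The obstacle is that vertices in a \bipa{2} instance can have degree up to $2$ on each side but a priori the circuit $C$ could have been built to allow larger degree; however \cref{rem:degin0k} already lets us assume all degrees lie in $\{0,1,2\}$ (splitting high-degree vertices), and for \bipa{2} with degrees in $\{0,1,2\}$ the bipartite graph already has maximum degree $2$, which is exactly the form required by \textsc{Leaf}. So the \bipa{2} graph, viewed simply as an undirected graph on vertex set $\{0,1\}\times\{0,1\}^n \cong \{0,1\}^{n+1}$ with max degree $2$, is directly a \textsc{Leaf} instance with the distinguished leaf $00^n$. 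A solution to this \textsc{Leaf} instance is another vertex of degree $1$, hence of odd degree, hence a valid \bipa{2} solution (an inconsistent-edge output of \textsc{Leaf} cannot occur once we have symmetrised the neighbour relation, or is handled by the standard consistency clause).

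I expect the main obstacle — really the only point needing care — to be the degree-reduction bookkeeping: \textsc{Leaf} demands maximum degree exactly $2$, whereas the general \bipa{2} formulation permits degrees in $\{0,1,2\}$ only after applying \cref{rem:degin0k}, and one must check that the splitting of high-degree vertices described there, when specialized to $k=2$, indeed produces a graph of maximum degree $2$ and that the induced map on solutions is polynomial-time computable from the circuit. Everything else is the routine ``double cover'' construction for bipartiteness and the standard handling of the inconsistent-edge clauses, so I would state those briefly and refer to \cref{rem:degin0k} and the pre-processing paragraph for the details.
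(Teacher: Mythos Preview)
Your direction $\bipa{2} \leq \textsc{Leaf}$ is fine and matches the paper's one-line observation that \bipa{2} is a special case of \textsc{Leaf}.

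The gap is in your $\textsc{Leaf} \leq \bipa{2}$ direction. The bipartite double cover you describe creates, for every vertex $x$ of the original graph, \emph{two} vertices $0x$ and $1x$, each with the same degree as $x$. In particular the trivial leaf $0^n$ produces not only the designated trivial solution $00^n$ but also a second degree-$1$ vertex $10^n$. Your claim ``any other odd-degree vertex $bx$ gives back an odd-degree vertex $x \neq 0^n$'' fails precisely at $b=1,\ x=0^n$: if the \bipa{2} oracle returns $10^n$, you recover only the original trivial leaf, not a new one. Since a many-one reduction gets a single oracle answer, this is a genuine failure of the reduction, not just a bookkeeping nuisance.

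The paper avoids this by using the vertex--edge incidence bipartite graph instead of the double cover: left vertices are the original vertices $x_u = 0u0^n$, right vertices are the original edges $y_{uv} = 1uv$, and $x_u$ is joined to $y_{uv}$ whenever $u$ is an endpoint of $\{u,v\}$. Then every right vertex has degree exactly $0$ or $2$, so the only odd-degree vertices are on the left and correspond bijectively to leaves of the original graph. No spurious solutions arise.
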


\begin{proof}
Recall that PPA can be defined using the canonical complete problem \textsc{Leaf}~\cite{papadimitriou1994complexity,beame1998relative}: given an undirected graph where every vertex has degree at most 2, and a leaf (i.e., degree $= 1$), find another leaf. This immediately yields \ppak{2} $\subseteq$ \textup{PPA}, since \bipa{2} is just a special case of \textsc{Leaf} where the graph is bipartite.

Given an instance of \textsc{Leaf} with graph $G=(\{0,1\}^n,E)$ we construct an instance of \bipa{2} on the vertex set $\{0,1\} \times \{0,1\}^{2n}$ as follows. For any $u \in \{0,1\}^n$ we have a vertex $x_u := 0u0^n$ on the left side of the bipartite graph. For any edge $\{u,v\} \in E$ ($u,v$ ordered lexicographically) we have a vertex $y_{uv}:=1uv$ on the right side of the bipartite graph and we create the edges $\{x_u,y_{uv}\}$ and $\{x_v,y_{uv}\}$. All other vertices in $\{0,1\} \times \{0,1\}^{2n}$ are isolated. In polynomial time we can construct a circuit that computes the neighbours of any vertex. Furthermore, $w \in \{0,1\}^n$ is a leaf, if and only if $x_w$ has degree 1. Finally, all vertices on the right-hand side have degree 0 or 2.
\end{proof}

\subsection{\texorpdfstring{$\bm{\ppakl{k}{\ell}}$}{PPA-k[\#l]} : Fixing the degree of the trivial solution}
In the definition of the \ppak{k}-complete problem \bipa{k} (\cref{def:bipartite-k}) the degree of the trivial solution $00^n$ can be any number in $\{1, \dots, k-1\}$. In this section we define more refined classes where the degree of the trivial solution is fixed. In \cref{sec:relationship}, these classes will be very useful to describe how the \ppak{k} classes relate to each other. These definitions are inspired by the corresponding ``counting principles'' studied in Beame et al.~\cite{beame1996counting} that were also defined in a refined form in order to describe how they relate to each other. We believe that these refined classes will also be useful to capture the complexity of natural problems. Note that for $k=2$, the degree of the trivial solution will always be $1$ and thus the question does not even appear in the study of PPA.

\begin{definition}
Let $k \geq 2$ and $1 \leq \ell \leq k-1$. The problem \bipal{k}{\ell} is defined as \bipa{k} (\cref{def:bipartite-k}) but with the additional condition $|C(00^n)| = \ell$.
\end{definition}
Note that this problem remains in TFNP, since the condition can be checked efficiently.

\begin{definition}[\ppakl{k}{\ell}]
Let $k \geq 2$ and $1 \leq \ell \leq k-1$. The class \ppakl{k}{\ell} is defined as the set of all \textup{TFNP} problems that many-one reduce to \bipal{k}{\ell}.
\end{definition}

If $k$ is some prime $p$, then these classes are not interesting. Indeed, it holds that \ppakl{p}{\ell} $=$ \ppak{p} for all $1 \leq \ell \leq p-1$. This can be shown using the following technique: take multiple copies of the instance and ``glue'' the trivial solutions together. If $p$ is prime, then any other degree of the glued trivial solution can be obtained (by taking the right number of copies). In fact this technique yields the stronger result:

\begin{lemma}\label{lem:gcdkell1-div-ell2}
If $\gcd(k,\ell_1)$ divides $\ell_2$, then \ppakl{k}{\ell_1} $\subseteq$ \ppakl{k}{\ell_2}.
\end{lemma}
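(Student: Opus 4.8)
The plan is to reduce \bipal{k}{\ell_2} to \bipal{k}{\ell_1} by the ``copying and gluing'' technique sketched above. Given an instance $C$ of \bipal{k}{\ell_1} on vertex set $\{0,1\} \times \{0,1\}^n$, we take $m$ disjoint copies of this graph and identify all the trivial solutions $00^n$ across the copies into a single vertex; the degree of this merged vertex becomes $m\ell_1$. We want this to be a valid instance of \bipal{k}{\ell_2}, which first requires the merged degree to be congruent to the right thing, and more precisely we need to be able to take degrees modulo $k$: the point is that $m\ell_1 \bmod k$ can be made equal to $\ell_2$ for a suitable choice of $m$ \emph{provided} $\ell_2$ lies in the subgroup of $\mathbb{Z}/k\mathbb{Z}$ generated by $\ell_1$, i.e.\ provided $\gcd(k,\ell_1) \mid \ell_2$. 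So first I would fix such an $m$ with $m \ell_1 \equiv \ell_2 \pmod k$ and $1 \le m \le k$ (so the blow-up is only by a constant factor $k$).

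Next I would carry out the construction carefully. The new vertex set is $\{0,1\} \times \{0,1\}^{n+\lceil \log m\rceil}$, where the extra bits index the copy. A non-trivial vertex $(b,x)$ in copy $j$ is renamed $(b, x, j)$; its neighbours are obtained by running $C$ within copy $j$, except that whenever $C$ would return the trivial vertex $00^n$ as a neighbour, we instead return the single merged trivial vertex, which I would place at $(0, 0^{n}, 0^{\lceil\log m\rceil})$ or similar. The merged trivial vertex's neighbour list is the concatenation, over all $j$, of the neighbour lists of $00^n$ in copy $j$ (with the copy index attached). Since each $|C(00^n)| = \ell_1 \le k-1$ and $m \le k$, the merged degree is $m\ell_1 \le k(k-1)$, which exceeds $k$ in general; here I invoke \cref{rem:degin0k} to split this high-degree vertex into several vertices of degree at most $k$, one of which retains degree $\equiv \ell_2 \pmod k$ and is declared the trivial solution. (Alternatively one argues directly with degrees in $\{0,1,\dots\}$ and notes the mod-$k$ condition is all that matters.) Indices, padding to a common bit-length, and the circuit computing all this are routine and polynomial-time.

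Finally I would verify the reduction. Any inconsistent-edge solution in the new instance pulls back to an inconsistent edge of $C$ (edges within a copy mirror edges of $C$, and edges incident to the merged/split trivial vertices mirror edges incident to $00^n$). Any non-trivial vertex $u$ with $|C'(u)| \notin \{0,k\}$ is either a vertex $(b,x,j)$ in some copy, in which case its degree equals the degree of $(b,x)$ in $C$ (being non-trivial it is unaffected by the gluing), so $x$ yields a solution of $C$; or it is one of the auxiliary vertices created when splitting the merged trivial vertex — but those are set up so that only the designated one has degree $\not\equiv 0$, and that one is the trivial solution, so this case does not arise. Hence every solution of the constructed \bipal{k}{\ell_2}-instance maps back to a solution of $C$, giving \ppakl{k}{\ell_1} $\subseteq$ \ppakl{k}{\ell_2}.

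The main obstacle I anticipate is bookkeeping rather than conceptual: making the gluing of neighbour lists consistent in \emph{both} directions (so no spurious inconsistent edges appear), and cleanly handling the degree overflow of the merged trivial vertex via \cref{rem:degin0k} so that exactly one auxiliary vertex carries the residue $\ell_2$ and all others have degree $0$ or $k$. Everything else — choosing $m$, padding bit-lengths, building the circuit — is standard.
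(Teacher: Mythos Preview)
Your proposal is correct and follows essentially the same approach as the paper: take $m$ copies with $m\ell_1 \equiv \ell_2 \pmod{k}$, merge the trivial solutions, and invoke \cref{rem:degin0k} to normalise degrees. One slip: your opening sentence says you will ``reduce \bipal{k}{\ell_2} to \bipal{k}{\ell_1}'', but the construction you then carry out (and need) is the reduction from \bipal{k}{\ell_1} to \bipal{k}{\ell_2}; the body of the argument is in the right direction, so this is just a wording error to fix.
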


\begin{proof}
Since $\gcd(k,\ell_1)$ divides $\ell_2$, there exists $m < k$ such that $m \times \ell_1 = \ell_2 \mod k$. Given an instance of \bipal{k}{\ell_1}, take the union of $m$ copies of the instance, i.e., $m2^n$ vertices on each side (and any additional isolated vertices needed to reach a power of $2$). Then, merge the $m$ different copies of the trivial solution into one (by redirecting edges to a single one). This vertex will have degree $m \ell_1 = \ell_2 \mod k$. Finally, apply the usual trick to ensure all degrees are in $\{0,1, \dots, k\}$ (\cref{rem:degin0k}).
\end{proof}

In particular, we also get the nice result \ppakl{k}{\ell} $=$ \ppakl{k}{\gcd(k,\ell)}. Applying the result to the case $k=6$, we get that \ppakl{6}{1} $=$ \ppakl{6}{5}, \ppakl{6}{2} $=$ \ppakl{6}{4}, as well as \ppakl{6}{1} $\subseteq$ \ppakl{6}{2} and \ppakl{6}{1} $\subseteq$ \ppakl{6}{2}. Thus, we have three ``equivalence classes'' $\{1,5\}$, $\{2,4\}$ and $\{3\}$ and the relationships $\{1,5\} \leq \{2,4\}$ and $\{1,5\} \leq \{3\}$. In \cref{sec:relationship}, we will show that $\{2,4\}$ corresponds to \ppak{3}, $\{3\}$ to \ppak{2} and $\{1,5\}$ to \ppak{2} $\cap$ \ppak{3}.

Now let us introduce some notation that will allow us to precisely describe the relationship between \ppak{k} and the \ppakl{k}{\ell}.
\begin{definition}[$\amper$ operation~\cite{buss2012propositional}]
Let $R_0$ and $R_1$ be two TFNP problems. Then the problem $R_0 \amper R_1$ is defined as: given an instance $I_0$ of $R_0$, an instance $I_1$ of $R_1$ and a bit $b \in \{0,1\}$, find a solution to $I_b$.
\end{definition}

This operation is commutative and associative (up to many-one equivalence). Indeed, $R_0 \amper R_1$ is many-one equivalent to $R_1 \amper R_0$, and $(R_0 \amper R_1) \amper R_2$ is many-one equivalent to $R_0 \amper (R_1 \amper R_2)$. Since the $\amper$ operation is associative, the problem $\amper_{\ell=1}^{k} R_\ell$ is well-defined up to many-one equivalence. It is also equivalent to the following problem: given instances $I_1, \dots, I_k$ of $R_1, \dots, R_k$ and an integer $j \in \{1,\dots,k\}$, find a solution to $I_j$.

We extend the $\amper$ operation to TFNP subclasses in the natural way. Let $C_0$ and $C_1$ be TFNP subclasses with complete problems $R_0$ and $R_1$ respectively. Then $C_0 \amper C_1$ is the class of all TFNP problems that many-one reduce to $R_0 \amper R_1$. Note that the choice of complete problems does not matter. Intuitively, this class contains all problems that can be solved in polynomial time by a Turing machine with a single oracle query to either $C_0$ or $C_1$.

Using this definition we obtain:

\begin{lemma}\label{lem:ppakl-amper}
For all $k \geq 2$ we have \ppak{k} $=$ $\amper\limits_{\ell=1}^{k-1}$ \ppakl{k}{\ell}.
\end{lemma}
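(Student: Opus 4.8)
The plan is to prove the two inclusions \ppak{k} $\subseteq$ $\amper_{\ell=1}^{k-1}$ \ppakl{k}{\ell} and $\amper_{\ell=1}^{k-1}$ \ppakl{k}{\ell} $\subseteq$ \ppak{k} separately, by exhibiting many-one reductions between \bipa{k} and $\amper_{\ell=1}^{k-1}$ \bipal{k}{\ell} (the latter being a complete problem for the right-hand class, by associativity of $\amper$ and the definition of $\amper$ on classes).

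For the inclusion \ppak{k} $\subseteq$ $\amper_{\ell=1}^{k-1}$ \ppakl{k}{\ell}, I would reduce \bipa{k} to the $\amper$-problem over the \bipal{k}{\ell}. Given an instance $C$ of \bipa{k}, the degree $\ell := |C(00^n)|$ of the trivial solution is a fixed value in $\{1,\dots,k-1\}$ that can be computed in polynomial time by evaluating the circuit. So $C$ is \emph{already} an instance of \bipal{k}{\ell}: I simply output the tuple consisting of this instance placed in the $\ell$-th coordinate, arbitrary trivial instances (say, a single edge with the trivial solution having the required degree) in the other coordinates, and the selector $j = \ell$. Any solution to the selected instance $I_\ell$ is exactly a solution to $C$, so $g$ is the identity on the returned solution. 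This direction is essentially immediate.

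For the reverse inclusion $\amper_{\ell=1}^{k-1}$ \ppakl{k}{\ell} $\subseteq$ \ppak{k}, I would reduce $\amper_{\ell=1}^{k-1}$ \bipal{k}{\ell} to \bipa{k}. Given instances $C_1,\dots,C_{k-1}$ of \bipal{k}{1},\dots,\bipal{k}{k-1} together with a selector $j$, I only care about $C_j$, which is an instance of \bipal{k}{j}, hence in particular an instance of \bipa{k} (the condition $|C_j(00^n)| = j$ is just a special case of $|C_j(00^n)| \in \{1,\dots,k-1\}$). So again the reduction just forwards $C_j$ (after possibly the standard relabelling of its vertex set into $\{0,1\}\times\{0,1\}^{n}$ and the edge-consistency pre-processing described after \cref{def:bipartite-k}, both polynomial-time), and recovers a solution of the $\amper$-instance directly from a solution of \bipa{k} on $C_j$.

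I do not expect any real obstacle here: both reductions are ``pass-through'' reductions that exploit the fact that \bipal{k}{\ell} instances are literally \bipa{k} instances with a syntactically enforced trivial-degree, and that \bipa{k} instances have a polynomial-time computable trivial-degree. The only minor points to be careful about are (i) checking that $\amper_{\ell=1}^{k-1}$ \bipal{k}{\ell} is genuinely complete for $\amper_{\ell=1}^{k-1}$ \ppakl{k}{\ell}, which follows from the remark after the definition of $\amper$ on classes that the choice of complete problems is irrelevant together with associativity, and (ii) making sure the trivial ``filler'' instances in the first direction are valid \bipal{k}{\ell} instances, which is trivial since one can always build a tiny graph with a single vertex of the prescribed degree $\ell \le k-1$. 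Hence the proof is short and I would present it in two short paragraphs, one per inclusion.
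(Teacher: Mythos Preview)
Your proposal is correct; in fact the paper does not give an explicit proof of this lemma at all, merely stating that it is ``easy to prove,'' and your argument is exactly the straightforward two-inclusion proof one would expect (compute the trivial degree to pick the right $\ell$-slot in one direction, and observe that each \bipal{k}{\ell} instance is literally a \bipa{k} instance in the other).
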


\begin{proof}
One containment immediately follows from the fact that $\ppakl{k}{\ell} \subseteq \ppak{k}$ for all $\ell \in \{1, \dots, k-1\}$. For the other containment, note that for any instance $I$ of \bipa{k} we can easily compute $\ell \in \{1,\dots,k-1\}$ such that $I$ is also an instance of \bipal{k}{\ell}.
\end{proof}

Together with \cref{lem:gcdkell1-div-ell2}, \cref{lem:ppakl-amper} yields, e.g., \ppak{6} $=$ \ppakl{6}{2} $\amper$ \ppakl{6}{3}.

\section{Equivalent Definitions}\label{sec:complete-toolbox}

In this section we show that \ppak{k} can be defined by using other problems instead of \bipa{k}. The totality of these problems is again based on arguments modulo $k$. By showing that these problems are indeed \ppak{k}-complete, we provide additional support for the claim that \ppak{k} captures the complexity of ``polynomial arguments modulo $k$''. While these problems are not ``natural'' and thus not interesting in their own right, they provide equivalent ways of defining \ppak{k}, which can be very useful when working with these classes. In particular, we make extensive use of these equivalences in this work.

The TFNP problems we consider are the following:
\begin{itemize}
    \item \imba{k} : given a directed graph and a vertex that is \emph{unbalanced-mod-$k$}, i.e., out-degree $-$ in-degree $\neq 0 \mod k$, find another such vertex.
    \item \hyper{k} : given a hypergraph and a vertex that has degree $\neq 0 \mod k$, find another such vertex or a hyperedge that has size $\neq k$.
    \item \group{k} : given a set of size $\neq 0 \mod k$ and a partition into subsets, find a subset that has size $\neq k$.
\end{itemize}
As usual, the size of the graph (respectively hypergraph, set) can be exponential in the input size, and the edges (resp.\ hyperedges, subsets) can be computed efficiently locally. We also define the corresponding problems \imbal{k}{\ell}, \hyperl{k}{\ell} and \groupl{k}{\ell} analogously. The formal definitions of all these problems are provided in \cref{sec:complete-defs}.

\begin{theorem}\label{thm:complete-problems}
Let $k \geq 2$ and $1 \leq \ell \leq k-1$.
\begin{itemize}
    \item \imbal{k}{\ell}, \hyperl{k}{\ell}, \groupl{k}{\ell} are \ppakl{k}{\ell}-complete,
    \item \imba{k}, \hyper{k}, \group{k} are \ppak{k}-complete.
\end{itemize}
\end{theorem}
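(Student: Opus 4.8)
The plan is to establish a cycle of many-one reductions among the six problems so that all equivalences follow from transitivity, using \bipal{k}{\ell} (resp.\ \bipa{k}) as the anchor since it is \ppakl{k}{\ell}-complete (resp.\ \ppak{k}-complete) by definition. Concretely, I would prove the chain
\[
\bipal{k}{\ell} \leq \imbal{k}{\ell} \leq \hyperl{k}{\ell} \leq \groupl{k}{\ell} \leq \bipal{k}{\ell},
\]
and the analogous chain for the unparametrised versions. By \cref{lem:ppakl-amper}, the unparametrised statement actually follows from the parametrised one together with the fact that the $\amper$-operation and many-one reductions interact nicely (a reduction between problems lifts to a reduction between their $\amper$-combinations); so the real content is the parametrised chain, and I would present that carefully and then remark that the unparametrised case is immediate. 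Throughout, I would freely use \cref{rem:degin0k} and the pre-processing trick to assume all "edge-consistency" solutions have been eliminated, so that each problem effectively has only degree-type solutions.

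Here is how I would handle each link. For \bipal{k}{\ell} $\leq$ \imbal{k}{\ell}: a bipartite undirected graph becomes a directed graph by orienting every edge from left to right; then a left vertex of undirected degree $d$ has out-degree $d$ and in-degree $0$, so it is unbalanced-mod-$k$ iff $d \neq 0 \bmod k$, and symmetrically for right vertices with in-degree, so the two solution sets coincide (and the trivial solution $00^n$ keeps imbalance $\ell$). For \imbal{k}{\ell} $\leq$ \hyperl{k}{\ell}: the standard move is to replace each directed edge $(u,v)$ by a hyperedge of size $k$ containing $u$, $v$, and $k-2$ fresh "padding" vertices, but with a sign trick — we want out-degree minus in-degree to be read off as ordinary hypergraph degree modulo $k$. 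The clean way is to have each directed edge $(u,v)$ contribute a hyperedge that covers $u$ once and $v$ exactly $k-1$ times (i.e.\ $v$ together with $k-1$ "copies"), realised via auxiliary vertices so that every hyperedge has size exactly $k$ and the mod-$k$ degree of an original vertex equals out-degree $+ (k-1)\cdot$in-degree $=$ out-degree $-$ in-degree $\bmod k$; the auxiliary vertices are arranged to have degree $0 \bmod k$. For \hyperl{k}{\ell} $\leq$ \groupl{k}{\ell}: take the ground set to be all (vertex, incident-hyperedge-slot) pairs — its size is the sum of the sizes of the hyperedges, which is $0 \bmod k$ unless some hyperedge has size $\neq k$; the partition groups pairs by vertex, so a block of size $\neq 0 \bmod k$ is exactly a bad-degree vertex, and the trivial solution of \hyperl{k}{\ell} gives a distinguished block of size $\ell$. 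Finally, for \groupl{k}{\ell} $\leq$ \bipal{k}{\ell}: build a bipartite graph with the ground-set elements on the left and the partition blocks on the right, joining each element to its block; left vertices have degree $1$ (so are never bad, after we note the ground set itself plays the role of the "$\neq 0 \bmod k$ side"), and a right vertex of degree $\neq 0 \bmod k$ is a small/large block — here one has to be a little careful to make the distinguished block of size $\ell$ become the trivial solution $00^n$ and to argue that the sizes match up modulo $k$ so that a \bipal{k}{\ell} solution yields a \groupl{k}{\ell} solution rather than being spuriously on the left.

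The main obstacle I anticipate is the \imbal{k}{\ell} $\leq$ \hyperl{k}{\ell} step, because directed graphs encode a \emph{signed} count (out minus in) whereas hypergraphs encode an \emph{unsigned} count (ordinary degree), and reconciling the two modulo $k$ while simultaneously forcing every hyperedge to have size exactly $k$ and keeping all auxiliary vertices at degree $0 \bmod k$ requires a genuinely careful gadget. Getting the bookkeeping right so that the parametrised condition $|C(00^n)| = \ell$ is preserved exactly (not just modulo $k$) at every stage is the other place where I would slow down and check details; in particular the last reduction \groupl{k}{\ell} $\leq$ \bipal{k}{\ell} needs the distinguished block to land on the correct side with the correct degree, which dictates how the ground set and blocks are laid out. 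Once the cycle is in place, the theorem follows, and I would close by noting (via \cref{lem:ppakl-amper} and the fact that many-one reductions are closed under the $\amper$ construction) that the unparametrised completeness statements are immediate consequences.
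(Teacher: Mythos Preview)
Your cycle strategy is right, but two of the four links are broken as stated, and the paper in fact runs the cycle in a \emph{different order} that sidesteps exactly the obstacle you anticipate.

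The step \groupl{k}{\ell} $\leq$ \bipal{k}{\ell} does not work with elements on the left and blocks on the right: every left vertex then has degree exactly $1 \notin \{0,k\}$, so \emph{every element} becomes a solution of the bipartite instance, and these carry no information about the input partition. Your parenthetical about ``the ground set playing the role of the $\neq 0 \bmod k$ side'' does not help --- \bipal{k}{\ell} treats both sides symmetrically when declaring solutions. The paper's construction is genuinely different: it fixes a \emph{canonical} partition of $\{m,\dots,2^n-1\}$ into blocks of size $k$ plus one block of size $\ell$, puts these canonical blocks on the left and the input blocks on the right, and joins a canonical block $P$ to an input block $Q$ with multiplicity $|P\cap Q|$. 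Now every left vertex has degree $k$ except the size-$\ell$ canonical block, which becomes the trivial solution $00^n$; multi-edges are then removed by a small explicit gadget (the ``Mitosis gadget''). This is a non-obvious idea that your sketch is missing.

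Your step \hyperl{k}{\ell} $\leq$ \groupl{k}{\ell} also fails the input format: your incidence ground set has size $\sum_e |e| \equiv 0 \pmod k$ when all hyperedges have size $k$, but \groupl{k}{\ell} requires ground-set size $\equiv \ell \pmod k$ --- there is no ``distinguished bad block'' in its input, only the global size congruence. (Removing the $\ell$ incidences of the trivial vertex gives size $\equiv -\ell$, not $\ell$.) The paper avoids both this issue and the signed/unsigned difficulty you flag by ordering the cycle as
\[
\bipal{k}{\ell} \;\leq\; \hyperl{k}{\ell} \;\leq\; \imbal{k}{\ell} \;\leq\; \groupl{k}{\ell} \;\leq\; \bipal{k}{\ell}.
\]
In this order \hyperl{k}{\ell} $\leq$ \imbal{k}{\ell} is the trivial direction (each hyperedge becomes a sink vertex, each hypergraph vertex a source, so unsigned degree becomes out-degree and hyperedge size becomes in-degree), and \imbal{k}{\ell} $\leq$ \groupl{k}{\ell} proceeds by first normalising so every vertex is a pure source or pure sink and then assigning each vertex $k$ ``slots'' in the ground set, grouped according to the edge structure.
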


In his online post~\cite{jerabekstackexchange}, Je{\v r}{\'a}bek proves that \bipa{3}, \imba{3} and \group{3} are equivalent and (correctly) claims that the proof generalises to any other prime. Thus, our contribution is the definition of the problems for any $k \geq 2$ (and the $\ell$-parameter versions) and the generalisation of the result to any $k \geq 2$ (not only primes) and to the $\ell$-parameter versions of the problems, as well as to the new problem \hyper{k}.
The proof of \cref{thm:complete-problems} can be found in \cref{sec:complete-problems-proof}.

The problem \imba{k} is a generalisation of the PPAD-complete problem \textsc{Imbalance}~\cite{beame1998relative,GH2019hairy} : given a directed graph and a vertex that is unbalanced (i.e., out-degree $-$ in-degree $\neq 0$), find another unbalanced vertex. It is known~\cite{GH2019hairy} that in \textsc{Imbalance} we can assume without loss of generality that the given vertex has imbalance exactly $1$. As a result, \textsc{Imbalance} trivially reduces to \imba{k}, and thus \cref{thm:complete-problems} also yields\footnote{This observation was also made by Je{\v r}{\'a}bek for the classes \ppak{p} ($p$ prime).}:

\begin{corollary}
For all $k \geq 2$, we have \textup{PPAD} $\subseteq$ \ppak{k}.
\end{corollary}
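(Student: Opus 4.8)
The plan is to deduce the corollary directly from \cref{thm:complete-problems}, specifically from the fact that \imba{k} is \ppak{k}-complete. Since \imba{k} is complete for \ppak{k}, it suffices to show that the PPAD-complete problem \textsc{Imbalance} many-one reduces to \imba{k}; this places $\textup{PPAD} \subseteq \ppak{k}$.

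First I would recall the definitions side by side. In \textsc{Imbalance} we are given (implicitly, via a circuit) a directed graph on a vertex set $\{0,1\}^n$ together with a vertex whose out-degree minus in-degree is nonzero over $\mathbb{Z}$, and we must find another vertex with out-degree $\neq$ in-degree. In \imba{k} the same kind of input is given, but the trivial solution is a vertex that is unbalanced-mod-$k$, i.e.\ out-degree $-$ in-degree $\not\equiv 0 \pmod k$, and we must find another such vertex. The key observation is simply that if an integer is nonzero, it need not be nonzero modulo $k$, so the reduction is not completely trivial — we cannot just hand the same instance over unchanged, because the imbalance of the trivial vertex could be a nonzero multiple of $k$.

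The fix is a standard amplification trick: given an \textsc{Imbalance} instance with trivial vertex $t$ of (integer) imbalance $d \neq 0$, note that $|d| \leq k$ after the usual preprocessing that bounds all degrees by $k$ (as in \cref{rem:degin0k}; more generally $|d|$ is bounded by a polynomial since it is read off from circuit outputs). We take $m$ disjoint copies of the graph, for a suitable $1 \leq m < k$, and glue the $m$ copies of $t$ into a single vertex whose imbalance becomes $md$. Choosing $m$ so that $md \not\equiv 0 \pmod k$ is possible: write $d = d' \cdot g$ where we arrange, by taking $m$ appropriately, that the resulting imbalance is exactly $g = \gcd(k,d)$ modulo $k$, which is nonzero mod $k$. (Concretely one can even always take $m$ such that $md \equiv \gcd(k,d) \pmod k$, by Bézout, exactly as in the proof of \cref{lem:gcdkell1-div-ell2}.) Every other unbalanced-mod-$k$ vertex in the new graph is an unbalanced (over $\mathbb{Z}$) vertex in one of the copies, hence yields a genuine solution to the original \textsc{Imbalance} instance; the merged vertex itself, being the designated trivial solution, is excluded. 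The circuit for the new graph is clearly constructible in polynomial time.

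The main obstacle — really the only subtlety — is ensuring the glued trivial vertex has imbalance genuinely nonzero modulo $k$ rather than a nonzero multiple of $k$, and arguing that no spurious solutions are introduced by the gluing (a vertex adjacent to the merged trivial vertex across copies must be accounted for correctly in the imbalance bookkeeping, but since we only redirect the endpoints incident to the $t$-copies to one vertex, each such vertex's in/out-degrees are unchanged). Alternatively, and even more cleanly, one can bypass amplification entirely by appealing to the footnoted remark: Je{\v r}{\'a}bek's argument shows \imba{p} is \ppak{p}-complete for prime $p$, and since \textsc{Imbalance} $\leq$ \imba{p} $\leq$ \bipa{p} $\leq$ \bipa{k} (the last inclusion because any prime $p \mid k$ — or just $p=2$ — gives \ppak{p} $\subseteq$ \ppak{k} via results of \cref{sec:relationship}), we again conclude; but the self-contained amplification reduction above is the cleanest route given only the material in the excerpt.
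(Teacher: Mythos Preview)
Your argument reaches the right conclusion but is more elaborate than the paper's, which simply observes that \textsc{Imbalance} \emph{trivially} reduces to \imba{k}. The point you worry about—that the distinguished vertex might have imbalance divisible by $k$—does not arise once one uses the canonical PPAD-complete form (End-of-Line, in-/out-degree at most $1$), where the trivial vertex has imbalance exactly $1$. Since any vertex unbalanced mod $k$ is a fortiori unbalanced over $\mathbb{Z}$, handing over the instance unchanged already gives the reduction; no gluing of copies is needed.

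Your amplification route also has a small gap as written: you claim that $g=\gcd(k,d)$ is nonzero modulo $k$, but if $k\mid d$ then $g=k\equiv 0\pmod k$, and your preprocessing to degrees $\le k$ does not exclude $d=\pm k$ (out-degree $k$, in-degree $0$ is allowed). This is easy to repair—split the trivial vertex once more, or reduce to End-of-Line first so that $|d|=1$—but once repaired the amplification step is superfluous anyway.
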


Furthermore, if we use the convention that $a = b \mod 0$ if and only if $a=b$, then \imba{0} actually corresponds to \textsc{Imbalance}. Thus, in a certain sense we could define $\ppak{0} = \ppad$. On the other hand, \imba{1} is a trivial problem.

\subsection{Formal definitions}\label{sec:complete-defs}

\paragraph*{Imbalance.}
A directed graph on the vertex set $\{0,1\}^n$ is represented by Boolean circuits $S,P : \{0,1\}^n \to \set_{\leq k}(\{0,1\}^n)$ that output the successor and predecessor set of a given vertex, respectively. As before, it is enough to consider the case where the in- and out-degree of any vertex is at most $k$, since the general case reduces to this (analogously to \cref{rem:degin0k}). We syntactically enforce that $x \notin S(x) \cup P(x)$ and we interpret $S(x)$ (respectively, $P(x)$) as the set of potential successors (respectively, predecessors) of $x$. There is a directed edge from $x$ to $y$ if $y \in S(x)$ and $x \in P(y)$. The following problem was defined by Je{\v r}{\'a}bek~\cite{jerabekstackexchange}, but only for prime $k$ and without the $\ell$-parameter version.

\begin{definition}
Let $k \geq 2$. The problem \imba{k} is defined as: given Boolean circuits $S,P : \{0,1\}^n \to \set_{\leq k}(\{0,1\}^n)$ representing a directed graph on the vertex set $\{0,1\}^n$ with $|S(0^n)|-|P(0^n)| \notin \{-k,0,k\}$, find
\begin{itemize}
    \item $x \neq 0^n$ such that $|S(x)|-|P(x)| \notin \{-k,0,k\}$
    \item or $x,y$ such that $y \in S(x)$ but $x \notin P(y)$, or $y \in P(x)$ but $x \notin S(y)$.
\end{itemize}
For $1 \leq \ell \leq k-1$, \imbal{k}{\ell} is defined with the additional condition $|S(0^n)|-|P(0^n)| = \ell$.
\end{definition}

\paragraph*{Hypergraph.}
A hypergraph on the vertex set $\{0,1\}^n$ is represented as follows. For every vertex $x \in \{0,1\}^n$, a circuit $C: \{0,1\}^n \to \set_{\leq k}(\set_{\leq k}(\{0,1\}^n))$ outputs the set $C(x)$ of all hyperedges containing $x$, where each hyperedge is a set of vertices in $\{0,1\}^n$. As usual, we only need to consider the case where every vertex is contained in at most $k$ hyperedges and every hyperedge has size at most $k$. A hyperedge $\{x_1, \dots, x_m\}$ exists in the hypergraph, if all the vertices involved indeed agree that it is present, i.e., if $\{x_1, \dots, x_m\} \in C(x_i)$ for all $i \in \{1,\dots,m\}$.
\begin{definition}
Let $k \geq 2$. The problem \hyper{k} is defined as: given a Boolean circuit $C: \{0,1\}^n \to \set_{\leq k}(\set_{\leq k}(\{0,1\}^n))$ representing a hypergraph on the vertex set $\{0,1\}^n$ with $|C(0^n)| \notin \{0,k\}$, find
\begin{itemize}
    \item $x \neq 0^n$ such that $|C(x)| \notin \{0,k\}$
    \item or $x$ such that $C(x)$ contains a hyperedge of size $\neq k$
    \item or $x,y$ such that $C(x)$ and $C(y)$ are not consistent with one another.
\end{itemize}
For $1 \leq \ell \leq k-1$, \hyperl{k}{\ell} is defined with the additional condition $|C(0^n)| = \ell$.
\end{definition}
Note that for $k=2$ this problem essentially corresponds to the \ppa-complete problem \textsc{Leaf} and its (equivalent) generalisation \textsc{Odd}~\cite{papadimitriou1994complexity,beame1998relative}: given an undirected graph and a vertex with odd degree, find another one.

\paragraph*{Partition.}
A partition of $\{0,1\}^n$ is represented by a Boolean circuit $C: \{0,1\}^n \to \{0,1\}^n$ as follows: $x \in \{0,1\}^n$ lies in the subset given by the orbit of $x$ with respect to $C$, i.e., $\{C^i(x): i \geq 0\}$, where $C^i(x) = C(C(\dots C(x)) \dots)$ ($i$ times). The problem we define below is based on the simple observation that a base set of size $\neq 0 \mod k$ cannot be partitioned into sets of size $k$. The base set consists of all elements in $\{0,1\}^n$ except for $m$ elements that have been removed, for some $m < 2^n$ such that $2^n-m \neq 0 \mod k$. Here it is convenient to identify $\{0,1\}^n$ with $\{0,1, \dots, 2^n-1\}$ in the natural way. Thus, we can think of the base set as simply being $\{m,m+1, \dots, 2^n-1\}$.

\begin{definition}[\group{k}]
Let $k \geq 2$. The problem \group{k} is defined as: given $m < 2^n$ with $2^n-m \neq 0 \mod k$ and a Boolean circuit $C: \{0,1\}^n \to \{0,1\}^n$, such that $C(x)=x$ for all $x < m$, find
\begin{itemize}
    \item $x \geq m$ and $d \in \mathbb{N}$ such that $C^d(x)=x$ and $d | k$, $d \neq k$
    \item or $x \in \{0,1\}^n$ such that $C^k(x) \neq x$
\end{itemize}
where $d | k$ means that $d$ divides $k$.

For $1 \leq \ell \leq k-1$, \groupl{k}{\ell} is defined with the additional condition $2^n-m = \ell \mod k$.
\end{definition}

The condition ``$C(x)=x$ for all $x < m$'' corresponds to excluding elements that do not lie in the base set and it can be enforced syntactically. The first solution type corresponds to finding a set in the partition such that its size divides $k$ (but is $\neq k$), while the second solution type corresponds to finding a set such that its size does not divide $k$. Note that a solution is guaranteed to exist since $2^n-m \neq 0 \mod k$.

The definition of this problem is inspired by the \modd{k} problems defined by Buss and Johnson~\cite{buss2012propositional} (for prime $k>2$) and by Johnson~\cite{johnson2011thesisNPsearch} (for any $k \geq 2$). In \cref{sec:pmod} we argue that, unlike the problem defined above, the \modd{k} problems only partially capture the complexity of arguments modulo $k$ (when $k$ is not a prime power). The problem was also defined by Je{\v r}{\'a}bek~\cite{jerabekstackexchange}, but only for $k$ prime and without the $\ell$-parameter version. Finally, note that \group{2} essentially corresponds to the \ppa-complete problem \textsc{Lonely}~\cite{beame1998relative}.

The definition of this problem can be modified in various ways without changing its complexity. For instance, the first solution type can be changed to simply ask for $x \geq m$ such that $C^d(x) = x$ for some $d < k$. We have defined the problem in a slightly more complicated way to make the connection with the \modd{k} problems more immediate (see \cref{sec:pmod}). Yet another equivalent way of defining the problem would be to consider a Boolean circuit $C: \{0,1\}^n \to \set_{\leq k}(\{0,1\}^n)$ where $C(x) \subseteq \{0,1\}^n$ is interpreted as the set containing $x$ in the partition. A solution would then be any $x \geq m$ with $|C(x)| < k$ or any $x,y$ witnessing an inconsistency in the partition given by $C$.

\subsection{Proof of \texorpdfstring{\cref{thm:complete-problems}}{Theorem~\ref*{thm:complete-problems}}}\label{sec:complete-problems-proof}

We omit some details that are easy to fill in. For example, when given an instance of \imbal{k}{\ell}, we assume that all the edges are well-defined, i.e., solutions of the second type never occur. Indeed, given a generic instance of the problem, it can be reduced to an instance where this holds by modifying the circuits so that they check and correct the successor/predecessor list before outputting it. Note that in the new instance only solutions of the first type can occur, but they can yield a solution of the second type of the original problem. The same observation also holds for \bipal{k}{\ell} (edges well-defined), \hyperl{k}{\ell} (hyperedges well-defined) and \groupl{k}{\ell} (size of any subset divides $k$).

\paragraph*{$\bm{\bipal{k}{\ell} \leq \hyperl{k}{\ell}}$ :}
We construct a hypergraph on the vertex set $\{0,1\}^n$. We identify every vertex $u$ of the hypergraph with the vertex $0u$ on the left-hand side of the bipartite graph. The hyperedges are given by the vertices on the right-hand side of the bipartite graph. More precisely, if for every right-hand side vertex $1v$ we let $N(1v)$ be the set of neighbours (on the left-hand side), then the set of hyperedges is exactly $\{N(1v): v \in \{0,1\}^n\}$. Note that given $u \in \{0,1\}^n$, we can find all the hyperedges containing $u$ in polynomial time. Furthermore, since the vertex $00^n$ has degree $\ell$ in the bipartite graph, the corresponding vertex $0^n$ in the hypergraph will also have degree $\ell$. It is easy to check that any solution of the \hyperl{k}{\ell} instance (in particular also any hyperedge that does not have size $k$) yields a solution to the \bipal{k}{\ell} instance.

\paragraph*{$\bm{\hyperl{k}{\ell} \leq \imbal{k}{\ell}}$ :} We construct a directed graph on the vertex set $\{0,1\}^{kn+2}$. For each vertex $u$ of the hypergraph there is a vertex $v_u$ in the directed graph (e.g., $v_u = 0u0^{(k-1)n+1}$) and for each hyperedge $e$ of the hypergraph there is a vertex $v_e$ in the directed graph (e.g., $v_e = 1u_1 \dots u_m10^{(k-m)n}$ where $e= \{u_1, \dots, u_m\}$ is ordered lexicographically). We put a directed edge from $v_u$ to $v_e$ iff $u \in e$ (i.e., iff $e$ appears in the hyperedge list of $u$). All other vertices are isolated. Note that $0^{kn+2} = v_{0^n}$ has imbalance $\ell$ and for any vertex in $\{0,1\}^{kn+2}$ we can compute the predecessors and successors in polynomial time. If there is an imbalance modulo $k$ in a vertex $v_e$, then the corresponding hyperedge $e$ does not have size $k$. If there is an imbalance modulo $k$ in a vertex $v_u$, then $u$ has degree $\neq 0 \mod k$ in the hypergraph.

\paragraph*{$\bm{\imbal{k}{\ell} \leq \groupl{k}{\ell}}$ :}
Consider an instance of the problem \imbal{k}{\ell}. Split every vertex $v$ into two vertices $v_{in}$ and $v_{out}$, such that $v_{in}$ gets all the incoming edges and $v_{out}$ gets all the outgoing edges. If $v$ was balanced, i.e., in-deg($v$) $=$ out-deg($v$) $= d$, then we add $k-d$ edges from $v_{out}$ to $v_{in}$. We can assume that out-deg($0^n$) $= \ell$ and in-deg($0^n$) $= 0$ (just create a copy of $0^n$ that takes in-deg($0^n$) incoming and outgoing edges), and thus out-deg($0^n_{out}$) $= \ell$ and in-deg($0^n_{in}$) $= 0$. Note that we are using multi-edges, which are not allowed in the definition of the problem. However, this is not an issue, since this is just an intermediate step of the reduction. This new instance has the property that no vertex has both incoming and outgoing edges. Furthermore, any solution (i.e., a vertex with in- or out-degree not in $\{0,k\}$, except $0^n_{out}$) yields a solution of the original instance.

Thus, we can assume wlog that the \imbal{k}{\ell} instance (with multi-edges) is such that no vertex has both incoming and outgoing edges. We construct an instance of \groupl{k}{\ell} on the set $\{0,1\}^{k + n}$. Every vertex $u$ of the directed graph has $k$ corresponding elements in the set $\{0,1\}^{k + n}$, namely $u_1 = 10^{k-1}u$, $\dots$, $u_k = 0^{k-1}1u$. If $u$ does not have any outgoing edges, then $u_1, \dots, u_k$ form a subset of the partition, i.e., $C(u_i) = u_{i+1 \mod k}$. If $u$ has outgoing edges to $v^{(1)}, v^{(2)}, \dots, v^{(j)}$ ($j \leq k$, ordered lexicographically), then for every $i=1, \dots, j$ we put $u_i$ in a subset that we denote $S_{v^{(i)}}$. $u_{j+1}, \dots, u_k$ are put into isolated subsets, i.e., $C(u_i)=u_i$ for all $i = j+1, \dots, m$. Note that if $v$ has $k$ incoming edges, then $S_v$ will contain $k$ elements. Given any element $u_i$, we can compute all the elements in its subset in polynomial time (and thus efficiently construct $C$ that cycles through them in lexicographic order). Furthermore, since out-deg($0^n$) $= \ell$, the vertices $0^n_{\ell+1}, \dots, 0^n_k$ will be in singleton sets. Consider the subset of $\{0,1\}^{k+n}$ $X = \{u_i : u \in \{0,1\}^n, i \in \{1, \dots, k\}\} \setminus \{0^n_{\ell+1}, \dots, 0^n_k\}$. Then $|X| = k 2^n - (k-\ell) = \ell \mod k$. It is easy to check that any element in $X$ that is not contained in a subset of size $k$ (according to $C$), must yield a solution to the \imbal{k}{\ell} instance. Finally, the last step is to construct an efficient bijection between $X$ and the set of all integers $\{j : 2^{n+k} - |X| \leq j < 2^{n+k}\}$, which is easy to do. Thus, we have reduced the original instance to an instance of \groupl{k}{\ell} with inputs $m=2^{n+k}-|X|$ and $C$ (modified according to the bijection).

\paragraph*{$\bm{\groupl{k}{\ell} \leq \bipal{k}{\ell}}$ :}
Let us consider any instance $(C,m)$ of \groupl{k}{\ell} with parameter $n$. In particular, it holds that $m < 2^n$ and $2^n-m = \ell \mod k$. We construct a bipartite graph as follows. The vertex sets on the left and right side are $A = \{0,1\}^n$ and $B = \{0,1\}^n$ respectively. We can define a canonical partition of the numbers $m, m+1, \dots, 2^n-1$ into sets of size $k$ (and one set of size $\ell$). For example, $\{m, m+1, \dots, m+ \ell -1\}$, $\{m+ \ell, m+\ell + 1, \dots, m+\ell + k-1\}$, etc. Each set of the canonical partition corresponds to a vertex in $A$ as follows: a set containing $k$ numbers $x_1 < x_2 < \dots < x_k$ is represented by the vertex $x_1 \in A$. For the set of size $\ell$ in the canonical partition, we introduce a special case: it is represented by $0^n \in A$. Note that many vertices in $A$ will not correspond to any set of the canonical partition. For a number $x \geq m$, we let $L(x) \in A$ denote the vertex in $A$ representing the set containing $x$ in the canonical partition.

Another partition of the numbers $m, m+1, \dots, 2^n-1$ into sets of size at most $k$ is given by the instance $(C,m)$ of \groupl{k}{\ell}. Similarly to what we did above, we can associate each set in the partition given by $C$ with a vertex in $B$. We let $R(x) \in B$ denote the vertex of $B$ representing the set containing $x$ in the partition given by $C$. Note that for any vertex in $A$ or $B$ we can efficiently determine whether it represents a set of one of the two partitions and if so, which set exactly it represents.

For every $x \geq m$ we add an edge between $L(x) \in A$ and $R(x) \in B$, i.e., between the sets that contain $x$ in the two different partitions. This construction might introduce multi-edges (if some $x$ and $y$ lie in the same set in both partitions) but this can easily be resolved by using the \emph{Mitosis gadgets} described below. It is easy to see that for any vertex of the bipartite graph we can efficiently compute the set of all its neighbours. Finally, note that the vertex $0^n \in A$ has degree $\ell$, and any other vertex with degree $\neq 0 \mod k$ must necessarily lie in $B$ and correspond to a set in the partition given by $C$ that contains strictly less than $k$ elements. Thus any such vertex immediately yields a solution to the original \groupl{k}{\ell} instance.

\paragraph*{Mitosis gadgets.}
Let $k \geq 2$. We now show how to construct a small bipartite graph such that exactly one vertex on each side has degree $1$ and all other vertices have degree $k$ (or $0$). This ``gadget'' can then be used to increase the degree of two vertices (one on each side of the bipartite graph) without adding any solutions, i.e., vertices with degree $\neq 0 \mod k$.

The gadget is a bipartite graph with $k+1$ vertices on each side: $a_1, \dots, a_{k+1}$ and $b_1, \dots, b_{k+1}$. It contains all the edges $\{a_i,b_j\}$ for $i,j \leq k$, except the edge $\{a_k,b_k\}$. It also contains the edges $\{a_k,b_{k+1}\}$ and $\{a_{k+1},b_k\}$. Thus, all vertices have degree $k$, except for $a_{k+1}$ and $b_{k+1}$ which have degree $1$.

We call this the ``Mitosis'' gadget, because it allows us to duplicate edges that already exist. Let $u$ and $v$ be two vertices in a bipartite graph, one on each side. Furthermore, consider the case where there is an edge $\{u,v\}$. We would like to increase the degree of $u$ and $v$ by $1$, but without introducing any new solutions, in particular without introducing any vertex with degree $\neq 0 \mod k$. Using the Mitosis gadget, we can just add new vertices $a_1, \dots, a_k$ and $b_1, \dots, b_k$, and identify $a_{k+1}$ with $u$ and $b_{k+1}$ with $v$. Adding the corresponding vertices of the gadget yields a bipartite graph where the degree of $u$ and $v$ has increased by $1$, but no new solutions have been introduced. Note that this gadget can, in particular, be used to turn a bipartite graph with multi-edges into one without them, without changing the degree of existing vertices and without adding any new solutions.

\section{Relationship Between the Classes}\label{sec:relationship}

In this section, we present some results that provide deeper insights into how the classes relate to each other. For any $k \geq 2$, $\pfactors{k}$ denotes the set of all prime factors of $k$. The main conceptual result is that \ppak{k} is entirely determined by the set of prime factors of $k$:

\begin{theorem}\label{thm:ppak-structure}
For any $k \geq 2$ we have \ppak{k} $=$ $\amper\limits_{p \in \pfactors{k}}$ \ppak{p}.
\end{theorem}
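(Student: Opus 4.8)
The plan is to prove the two inclusions \ppak{k} $\supseteq$ $\amper_{p \in \pfactors{k}}$ \ppak{p} and \ppak{k} $\subseteq$ $\amper_{p \in \pfactors{k}}$ \ppak{p} separately, using the complete problems from \cref{thm:complete-problems} (especially \group{k} and \imba{k}) rather than \bipa{k}, since the set-partition and directed-graph formulations are the most convenient for modular arithmetic manipulations.

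\textbf{The easy inclusion ($\supseteq$).} It suffices to show \ppak{p} $\subseteq$ \ppak{k} for every $p \in \pfactors{k}$, since then each of the $|\pfactors{k}|$ oracle branches can be simulated by \bipa{k}, and a single oracle call suffices. Write $k = p \cdot m$. Given an instance of \group{p}, i.e.\ a set of size $\equiv \ell \not\equiv 0 \pmod p$ partitioned into blocks of size $p$, I would take $m$ disjoint copies of the set but \emph{merge the blocks across copies} so that every block has size $pm = k$; concretely, group the $m$ copies of each $p$-block into one $k$-block. The total set then has size $\equiv m\ell \pmod k$. This is nonzero mod $k$ unless $p \mid m\ell$, and since $\gcd(\ell,p)=1$ we need $p \mid m$ — which is not guaranteed. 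The cleaner fix is to not merge across copies but instead observe that a set of size $s$ with $s \not\equiv 0 \pmod p$ has $s \not\equiv 0 \pmod{p}$ hence, after padding each $p$-block up to a $k$-block using $k-p$ fresh isolated-in-$\mathbb{Z}_p$ dummy elements arranged so the bookkeeping works, one gets a valid \group{k}-type instance; more simply, one directly reduces \bipa{p} to \bipa{k} by taking one copy of the graph and replacing each right-hand vertex of degree $d < p$ by a vertex whose degree is artificially raised to $d + (k-p)\cdot(\text{something})$ — but the robust argument is just: any \bipa{p} instance is already a \bipa{k} instance after using Mitosis-style gadgets to bump every degree-$d$ vertex ($0 < d \le p$) up to degree $d$ again but now interpreted mod $k$; the point is $d \not\equiv 0 \pmod p \Rightarrow d \not\equiv 0 \pmod k$, so actually \emph{a \bipa{p} instance with all degrees in $\{0,\dots,p\} \subseteq \{0,\dots,k\}$ is literally already a \bipa{k} instance}, except that the trivial-solution degree $\ell \le p-1$ must be turned into a genuine solution-generating vertex — which it already is. Hence \bipa{p} $\le$ \bipa{k} trivially once we check $d \not\equiv 0 \pmod p$ does not accidentally become $\equiv 0 \pmod k$, which holds because any such $d \le p \le k$ and $d \not\equiv 0 \pmod k$ for $0 < d < k$; the case $d = k$ cannot arise since $d \le p < k$ (for $k$ not prime) or $p = k$ trivially. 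This gives \ppak{p} $\subseteq$ \ppak{k}, and then \cref{lem:ppakl-amper}-style reasoning (a single oracle query to \bipa{k} suffices) yields $\amper_{p}$ \ppak{p} $\subseteq$ \ppak{k}.

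\textbf{The hard inclusion ($\subseteq$).} I must show \bipa{k} (equivalently \imba{k}) reduces to $\amper_{p \in \pfactors{k}}$ \bipa{p}, i.e.\ to a problem that, given instances $I_p$ of \bipa{p} for each $p \mid k$ together with a selector, asks for a solution to one of them. By \cref{lem:ppakl-amper} it is equivalent to reduce each \bipal{k}{\ell} to that $\amper$. Let $k = p_1^{r_1}\cdots p_t^{r_t}$ and $P = \pfactors{k}$. The core idea: given a bipartite graph where the trivial vertex has degree $\ell \not\equiv 0 \pmod k$, there must be some prime $p \in P$ with $\ell \not\equiv 0 \pmod p$ (otherwise $k \mid \ell$). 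For \emph{that} $p$, reduce the instance to a \bipa{p} instance by keeping the same graph but reinterpreting every degree modulo $p$: vertices of degree $\equiv 0 \pmod k$ need \emph{not} have degree $\equiv 0 \pmod p$, so this does not directly work — a degree-$k$ vertex becomes degree $\equiv 0 \pmod p$ only if $p \mid k$, which is true, so a degree exactly $k$ vertex is fine; but a vertex of degree, say, $2k$ reinterpreted is fine too; the trouble is a vertex whose degree is a nonzero multiple of $p$ but not of $k$. To handle this, I would first normalise: using \cref{rem:degin0k} assume all degrees lie in $\{0,1,\dots,k\}$, so every non-trivial, non-solution vertex has degree exactly $0$ or exactly $k$. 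Then modulo $p$ every such vertex has degree $0$, the trivial vertex has degree $\ell \not\equiv 0$, so this \emph{is} a valid \bipa{p} instance, and any \bipa{p}-solution (a vertex of degree $\not\equiv 0 \pmod p$, necessarily with degree in $\{1,\dots,k-1\}$ since the only other options $0,k$ are $\equiv 0$) is a \bipa{k}-solution. The only subtlety is that we do not know \emph{which} $p \in P$ satisfies $\ell \not\equiv 0 \pmod p$ without inspecting $\ell$ — but $\ell$ is part of the \emph{instance}, computed in polynomial time from the input circuit (evaluate $C$ at $00^n$), so the reduction can compute $\ell$, pick any good $p$, and set the selector bit accordingly; for all other primes $q \in P$ it supplies a trivial padding instance of \bipa{q} (e.g.\ a single isolated-plus-one-edge gadget whose trivial vertex is a genuine leaf-type solution, or better, an instance whose only solution is immediate). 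Thus the $\amper$-instance is: for the chosen $p$, the normalised graph; for every other prime, a dummy. This is a valid many-one reduction, completing \ppak{k} $\subseteq$ $\amper_{p \in P}$ \ppak{p}.

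\textbf{Main obstacle.} The delicate point is the normalisation step combined with the selector: we need \cref{rem:degin0k} to guarantee that after splitting, \emph{every} vertex other than the trivial one has degree exactly in $\{0,k\}$ or is a solution — but a vertex of degree $d$ with $0 < d < k$ and $d \not\equiv 0 \pmod k$ is already a \bipa{k} solution, so we may simply halt the reduction and output it; the reduction only proceeds to build the $\amper$-instance when no such "cheap" solution is visible, and in that regime all degrees are genuinely in $\{0,k\}$. One must also double-check the degenerate case $k = 2$ (where $P = \{2\}$ and the statement is trivial) and, more importantly, verify that computing $\ell = |C(00^n)|$ and selecting $p$ is genuinely polynomial-time (it is: $|C(00^n)|$ is read off one circuit evaluation, and $P$ has at most $\log k$ elements). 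I expect the write-up will mostly consist of carefully stating the normalisation and checking that "reinterpret the same graph modulo $p$" sends solutions to solutions in both directions, with the $\amper$-bookkeeping being routine given \cref{lem:ppakl-amper} and the associativity/commutativity of $\amper$.
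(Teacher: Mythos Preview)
Your hard inclusion ($\subseteq$) is correct and more direct than the paper's route, which derives \cref{thm:ppak-structure} from the much stronger \cref{thm:ppakl-structure} (itself resting on \cref{thm:ppakl-char} and the appendix lemmas). Your ``normalise, pick $p\mid k$ with $p\nmid\ell$, reinterpret modulo $p$'' is exactly \cref{lem:ppa-krl_in_ppaklmodk}. One small omission: the \bipa{p} format caps degrees at $p$, so you cannot literally reinterpret; you must split each vertex of degree $d\le k$ into $\lceil d/p\rceil$ copies. Since $p\mid k$, degree-$k$ vertices split cleanly into degree-$p$ copies, and any \bipa{p}-solution then traces back to a vertex of degree $\notin\{0,k\}$, as you say.

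Your easy inclusion ($\supseteq$) has a genuine gap. For a reduction $\bipa{p}\le\bipa{k}$ you must show that every \bipa{k}-solution of the target instance yields a \bipa{p}-solution of the source --- you checked the converse direction. Under the identity map, a vertex of degree exactly $p$ is a \bipa{k}-solution (since $0<p<k$) but not a \bipa{p}-solution, so the naive reduction fails. Your \emph{first} attempt was in fact correct, and you abandoned it because of an arithmetic slip: with $k=pm$ and $1\le\ell\le p-1$, one has $m\ell\equiv 0\pmod{pm}$ iff $p\mid\ell$, which never holds --- you conflated the condition $pm\mid m\ell$ with the weaker $p\mid m\ell$. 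In the bipartite picture the fix is simply to give every edge weight $m=k/p$ (and remove the resulting multi-edges with Mitosis gadgets); every original degree $d$ becomes $dm$, and $dm\notin\{0,k\}$ iff $d\notin\{0,p\}$, so solutions correspond exactly. This is \cref{lem:ppa-kl_in_ppa-krlr}. With that correction your argument gives a self-contained proof of \cref{thm:ppak-structure} using only the ideas of \cref{lem:ppa-krl_in_ppaklmodk} and \cref{lem:ppa-kl_in_ppa-krlr}, bypassing \cref{thm:ppakl-structure} entirely.
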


\noindent This equation can be understood as saying the following:
\begin{itemize}
    \item Given a single query to an oracle for \ppak{k}, we can solve any problem in \ppak{p} for any $p \in \pfactors{k}$
    \item Given a single query to an oracle that solves any \ppak{p} problem for any $p \in \pfactors{k}$, we can solve any problem in \ppak{k}.
\end{itemize}

\begin{corollary}\label{cor:ppak-structure}
In particular, we have:
\begin{itemize}
    \item For $k_1,k_2 \geq 2$, if $\pfactors{k_1} \subseteq \pfactors{k_2}$, then \ppak{k_1} $\subseteq$ \ppak{k_2}.
    \item For all $k_1,k_2 \geq 2$, \ppak{k_1k_2} $=$ \ppak{k_1} $\amper$ \ppak{k_2}.
    \item For all $k \geq 2$ and all $r \geq 1$ we have \ppak{k^r} $=$ \ppak{k}.
\end{itemize}
\end{corollary}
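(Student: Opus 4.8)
The plan is to derive each of the three bullet points directly from the identity $\ppak{k} = \amper_{p \in \pfactors{k}} \ppak{p}$ provided by \cref{thm:ppak-structure}, using only elementary manipulations of the $\amper$ operation. The only general facts about $\amper$ I will need are: (i) $\amper$ is commutative and associative up to many-one equivalence (already noted after the definition of $\amper$); (ii) $\amper$ is idempotent, i.e.\ $C \amper C = C$, which holds because an instance of $R \amper R$ trivially reduces to $R$ by forgetting the bit, and conversely $R \leq R \amper R$; and (iii) $\amper$ is monotone, i.e.\ if $C_1 \subseteq C_1'$ and $C_2 \subseteq C_2'$ then $C_1 \amper C_2 \subseteq C_1' \amper C_2'$. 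These are routine and I would state them in a single preliminary line.

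\emph{First bullet.} Suppose $\pfactors{k_1} \subseteq \pfactors{k_2}$. Then by \cref{thm:ppak-structure},
\[
\ppak{k_1} = \amper_{p \in \pfactors{k_1}} \ppak{p} \subseteq \amper_{p \in \pfactors{k_2}} \ppak{p} = \ppak{k_2},
\]
where the middle inclusion follows because adjoining extra classes to an $\amper$-combination only makes it larger (formally, $C_1 \subseteq C_1 \amper C_2$ for any $C_2$, which itself is immediate from the definition by fixing the selector bit). I would spell out this one-line chain.

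\emph{Second bullet.} For any $k_1, k_2 \geq 2$ we have $\pfactors{k_1 k_2} = \pfactors{k_1} \cup \pfactors{k_2}$. Hence, using \cref{thm:ppak-structure} together with commutativity, associativity and idempotence of $\amper$ (idempotence is what handles primes lying in $\pfactors{k_1} \cap \pfactors{k_2}$),
\[
\ppak{k_1 k_2} = \amper_{p \in \pfactors{k_1} \cup \pfactors{k_2}} \ppak{p} = \Bigl(\amper_{p \in \pfactors{k_1}} \ppak{p}\Bigr) \amper \Bigl(\amper_{p \in \pfactors{k_2}} \ppak{p}\Bigr) = \ppak{k_1} \amper \ppak{k_2}.
\]

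\emph{Third bullet.} Since $\pfactors{k^r} = \pfactors{k}$ for every $r \geq 1$, \cref{thm:ppak-structure} immediately gives $\ppak{k^r} = \amper_{p \in \pfactors{k^r}} \ppak{p} = \amper_{p \in \pfactors{k}} \ppak{p} = \ppak{k}$. (Alternatively this is the special case $k_1 = k$, $k_2 = k^{r-1}$ of the second bullet combined with idempotence and induction on $r$.)

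\emph{Main obstacle.} There is essentially no hard step here: given \cref{thm:ppak-structure}, the corollary is pure bookkeeping about the set $\pfactors{\cdot}$ under multiplication and about the algebraic laws satisfied by $\amper$. The only point requiring a little care — and the one I would make sure to justify explicitly rather than wave at — is idempotence of $\amper$ at the level of \emph{classes} (so that repeated primes collapse), and the fact that the class $C_0 \amper C_1$ is well-defined independently of the choice of complete problems, which was asserted but not proved in the excerpt; I would either invoke that assertion or give the one-line argument that swapping complete problems yields mutually-reducible versions of $R_0 \amper R_1$.
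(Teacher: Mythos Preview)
Your proposal is correct and matches the paper's (implicit) approach: the paper gives no separate proof of the corollary, simply prefacing it with ``In particular, we have,'' so it is treated as an immediate consequence of \cref{thm:ppak-structure} together with the basic algebraic laws of $\amper$. Your write-up spells out exactly those routine details (idempotence and monotonicity of $\amper$, and the identities $\pfactors{k_1k_2}=\pfactors{k_1}\cup\pfactors{k_2}$ and $\pfactors{k^r}=\pfactors{k}$), which is precisely what the paper leaves to the reader.
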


Using the \ppakl{k}{\ell} classes, we can formulate an even stronger and more detailed result. For any $k \geq 2$, $1 \leq \ell \leq k-1$, we define $\pfactors{k,\ell} = \pfactors{k/\gcd(k,\ell)}$. In this case the conceptual result says that \ppakl{k}{\ell} is entirely determined by the set of prime factors of $k/\gcd(k,\ell)$.

\begin{theorem}\label{thm:ppakl-structure}
For any integer constants $k$ and $\ell$ with $k \geq 2$ and $0 < \ell < k$, it holds that
$$\ppakl{k}{\ell} = \ppakl{\left(\prod_{p \in \pfactors{k,\ell}} p\right)}{1} = \bigcap_{p \in \pfactors{k,\ell}} \ppak{p}.$$
\end{theorem}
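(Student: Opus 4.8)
### Proof Proposal for Theorem \ref{thm:ppakl-structure}

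The plan is to prove the two equalities separately, using the toolbox of equivalent complete problems (\cref{thm:complete-problems}) and the gluing technique already introduced for \cref{lem:gcdkell1-div-ell2}. Write $q = \prod_{p \in \pfactors{k,\ell}} p$ for the squarefree kernel of $k/\gcd(k,\ell)$.

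\textbf{Step 1: $\ppakl{k}{\ell} = \ppakl{q}{1}$.} By \cref{lem:gcdkell1-div-ell2} we already know $\ppakl{k}{\ell} = \ppakl{k}{\gcd(k,\ell)}$, so it suffices to handle the case $\ell \mid k$ and show $\ppakl{k}{\ell} = \ppakl{(k/\ell)}{1}$, and then reduce $k/\ell$ to its squarefree kernel. The direction $\ppakl{(k/\ell)}{1} \subseteq \ppakl{k}{\ell}$ should come from a scaling reduction: given an instance of \bipal{(k/\ell)}{1}, replace each vertex by $\ell$ copies so that every degree gets multiplied by $\ell$; a vertex has degree $\not\equiv 0 \bmod k/\ell$ iff its blown-up version has degree $\not\equiv 0 \bmod k$, and the trivial solution goes from degree $1$ to degree $\ell$. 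For the reverse direction, work with \hyperl{k}{\ell} instead (by \cref{thm:complete-problems}): in a \hyperl{k}{\ell} instance all hyperedges have size $k$ and the trivial vertex has degree $\ell$; the observation is that the counting argument modulo $k$ is really only using the residue modulo $k/\ell$, since $\ell \mid k$ and $\ell \mid \deg 0^n$ — but this needs care, because individual vertex degrees need not be multiples of $\ell$. The cleaner route is probably to reduce \hyperl{k}{\ell} to \hyperl{(k/\ell)}{1} by taking $\ell$ disjoint copies of the hypergraph and merging the $\ell$ trivial vertices into one, which has degree $\ell \cdot \ell$... this does not obviously work, so I expect the genuine argument goes through the group/partition formulation: in \groupl{k}{\ell} one has a set of size $\equiv \ell \bmod k$ partitioned into blocks of size dividing $k$, and I would partition further / coarsen to get blocks whose sizes divide $k/\ell$ while keeping the total $\equiv 1 \bmod k/\ell$. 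The step reducing $k/\ell$ to its squarefree kernel $q$ is the easy observation that a block of size dividing $k/\ell$ also has size dividing a power of $q$, combined with \cref{cor:ppak-structure}'s prime-power collapse applied at the refined level.

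\textbf{Step 2: $\ppakl{q}{1} = \bigcap_{p \in \pfactors{k,\ell}} \ppak{p}$.} For $\subseteq$: for each prime $p \mid q$, reduce \bipal{q}{1} to \bipa{p} by the ``folding'' trick — since $q/p$ copies of the trivial solution glue to a vertex of degree $q/p$, which is $\not\equiv 0 \bmod p$; more precisely, take $q/p$ copies of a \bipal{q}{1} instance, merge the trivial solutions, apply \cref{rem:degin0k} to restore degrees $\le p$ after reading residues mod $p$, and check any nontrivial solution mod $p$ lifts back. This places $\ppakl{q}{1}$ in every $\ppak{p}$, hence in the intersection. For $\supseteq$: a problem in $\bigcap_p \ppak{p}$ reduces to \bipa{p} for every $p \mid q$ simultaneously; by CRT, since the $p$ are distinct primes, one combines the $|\pfactors{k,\ell}|$ bipartite instances into a single \bipa{q} (equivalently \bipal{q}{1}) instance on a disjoint-union-plus-product vertex set so that a degree is $\not\equiv 0 \bmod q$ in the combined graph iff it is $\not\equiv 0 \bmod p$ in at least one component — the standard construction takes one instance per prime and multiplies degrees across components using gadget graphs of constant degree $\equiv 0$ in the appropriate moduli, with the trivial solution arranged to have degree exactly $1$ mod $q$. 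This $\amper$-to-single-instance merge over distinct primes is exactly the content powering \cref{thm:ppak-structure}, so Step 2 is essentially a refinement of that theorem's proof to the $\ell$-labelled setting.

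\textbf{Main obstacle.} The delicate point is Step 1, specifically the reverse inclusion $\ppakl{k}{\ell} \subseteq \ppakl{(k/\gcd(k,\ell))}{1}$: one must squeeze a counting argument modulo $k$ with trivial-solution degree $\ell$ down to a counting argument modulo $k/\gcd(k,\ell)$ with trivial-solution degree $1$, and the naive copy-and-merge constructions change degrees multiplicatively in a way that fights the additive residue bookkeeping. I expect the right tool is the \groupl{k}{\ell} formulation (where ``degree'' is replaced by ``block size divides $k$''), possibly combined with first handling the prime-power factors of $k$ one at a time via the already-known \ppak{p^r} $=$ \ppak{p} collapse and its $\ell$-labelled analogue, then assembling via $\amper$ (\cref{lem:ppakl-amper}). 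Everything else — the scaling reductions, the CRT merge, the squarefree-kernel collapse — is routine given the machinery already set up in the paper.
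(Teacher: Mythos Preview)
Your overall decomposition matches the paper's: the inclusions are organised via \cref{thm:ppakl-char}, and the one direction not covered there --- $\bigcap_p \ppak{p} \subseteq \ppakl{q}{1}$ --- is handled by a product construction on \groupl{p}{1} instances, essentially as you sketch in Step~2.

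You have correctly located the real obstacle but not resolved it. The inclusion $\ppakl{k\ell}{\ell} \subseteq \ppakl{k}{1}$ (\cref{lem:ppa-kll_in_ppa-k1} in the paper) is the heart of the proof, and none of your suggested routes work: copy-and-merge changes degrees multiplicatively, as you yourself note; ``coarsening'' blocks of size $k\ell$ into blocks of size $k$ in the partition formulation has no obvious implementation; and invoking the prime-power collapse via \cref{cor:ppak-structure} is circular, since that corollary is derived from the present theorem. The paper's construction, adapted from Beame et al.~\cite{beame1996counting}, is of a genuinely different nature: given a \groupl{k\ell}{\ell} instance on a ground set $X$ (first padded so that $|X| \equiv \ell \pmod{k \cdot \ell!}$), one passes to the set $S = \binom{X}{\ell}$ of all $\ell$-element subsets. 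Then $|S| \equiv 1 \pmod{k}$, and two $\ell$-subsets are declared equivalent when they hit each original block in the same multiset of counts; each such equivalence class has cardinality $\prod_i \binom{k\ell}{\alpha_i}$ with $\sum_i \alpha_i = \ell$, which is always divisible by $k$, so $S$ can be explicitly partitioned into $k$-blocks whose only possible defects trace back to a short block of $X$. This combinatorial idea is the missing ingredient, and it is not something that degree-scaling or copy-merging will produce.
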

The proof of \cref{thm:ppakl-structure} can be found in the next section. Before we move on to that, let us briefly show that \cref{thm:ppak-structure} follows from \cref{thm:ppakl-structure}.

\begin{proof}[Proof of \cref{thm:ppak-structure}]
Using \cref{lem:ppakl-amper} and \cref{thm:ppakl-structure} we can write
$$\ppak{k} = \amper\limits_{\ell=1}^{k-1} \ppakl{k}{\ell} = \amper\limits_{\ell=1}^{k-1} \left( \bigcap_{p \in \pfactors{k,\ell}} \ppak{p} \right) = \amper\limits_{p \in \pfactors{k}} \ppak{p}$$
where the last equality follows by noting that $\pfactors{k,\ell} \subseteq \pfactors{k}$ for all $\ell$, and $\pfactors{k,k/p} = \{p\}$ for all $p \in \pfactors{k}$.
\end{proof}

\subsection{Proof overview}

\begin{proof}[Proof of \cref{thm:ppakl-structure}]
All containment results follow from \cref{thm:ppakl-char} below, except
$$\ppakl{\left(\prod_{p \in \pfactors{k,\ell}} p\right)}{1} \supseteq \bigcap_{p \in \pfactors{k,\ell}} \ppak{p}.$$

Let $\pfactors{k,\ell} = \{p_1, \dots, p_d\}$. We will show how to combine a set of instances $(C_1,m_1),$ $\dots,$ $(C_d,m_d)$, where $(C_i,m_i)$ is an instance of \groupl{p_i}{1}, into a single instance of \groupl{s}{1}, where $s=p_1p_2 \cdots p_d$, such that any solution to this instance yields a solution to one of the $(C_i,m_i)$ instances. Without loss of generality, we can assume that the parameter $n$ is the same for all $(C_i,m_i)$ instances. Without loss of generality, we can assume that $2^n-m_i = 1 \mod s$ for all $i$, because we can add at most $\prod_{j \neq i} p_i$ sets of size $p_i$ to achieve this (see the proof of \cref{lem:ppa-kll_in_ppa-k1}). Note that we then have $(2^n-m_1)(2^n-m_2) \cdots (2^n-m_d) = 1 \mod s$. Furthermore, for $(x_1, \dots, x_d)$ and $(y_1, \dots, y_d)$ with $x_i,y_i \geq m_i$ we can define $x \equiv y$ if and only if $x_i \equiv_i y_i$ for all $i$, where $x_i \equiv_i y_i$ means that $x_i$ and $y_i$ lie in the same set in instance $(C_i,m_i)$. If for all $i$, $x_i$ lies in a set of size $p_i$ in $(C_i,m_i)$, then $x$ will lie in a set of size $s$. Thus any solution yields a solution to one of the original instances. The details to fully formalise this are very similar to the proof of \cref{lem:ppa-k1_in_ppa-r1}.
\end{proof}

In \cite{beame1996counting} Beame et al.\ investigated the relative proof complexity of so-called ``counting principles''. These counting principles are formulas that represent the fact that a set of size $\neq 0 \mod k$ cannot be partitioned into sets of size $k$. They investigated the relationship between these principles in terms of whether one can be proved from the other by using a constant-depth, polynomial-size Frege proof. Their main result is a full characterisation of when this is possible or not. As noted by Johnson~\cite{johnson2011thesisNPsearch}, these counting formulas do not yield NP search problems, but they can be related to corresponding NP search problems (TFNP, in fact). Indeed, Johnson uses this connection to obtain some separation results between his \ppmod{k} classes (see \cref{sec:pmod}) from Beame et al.'s negative results. Our contribution is using Beame et al.'s positive results in order to prove inclusion results about the \ppakl{k}{\ell} classes. More precisely, we modify their proofs to obtain polynomial-time reductions between our \groupl{k}{\ell} problems.
Thus, we obtain the following analogous result:

\begin{theorem}\label{thm:ppakl-char}
Let $k_1,k_2 \geq 2$ and $0 < \ell_i < k_i$ for $i=1,2$. If $\pfactors{k_2,\ell_2} \subseteq \pfactors{k_1,\ell_1}$, then $\ppakl{k_1}{\ell_1} \subseteq \ppakl{k_2}{\ell_2}$.
\end{theorem}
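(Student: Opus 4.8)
The plan is to mimic the structure of the proof of Beame, Impagliazzo, Krajíček, Pitassi, Urquhart and Woods on the relative complexity of counting principles, but recast entirely in terms of polynomial-time many-one reductions between the problems \groupl{k_1}{\ell_1} and \groupl{k_2}{\ell_2}, which are \ppakl{k_1}{\ell_1}-complete and \ppakl{k_2}{\ell_2}-complete respectively by \cref{thm:complete-problems}. So the goal reduces to: given the hypothesis $\pfactors{k_2,\ell_2} \subseteq \pfactors{k_1,\ell_1}$, construct a many-one reduction \groupl{k_1}{\ell_1} $\leq$ \groupl{k_2}{\ell_2}. Here \groupl{k}{\ell} is the problem: given a set of size $\equiv \ell \bmod k$ partitioned into blocks, find a block whose size does not divide $k$ (equivalently, after the usual preprocessing, a block of size $\neq k$).

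First I would reduce to a cleaner form of the hypothesis. Writing $s_i = k_i/\gcd(k_i,\ell_i)$, we have $\pfactors{k_i,\ell_i} = \pfactors{s_i}$, and by \cref{thm:ppakl-structure} (or rather its already-proven containment direction, \cref{lem:gcdkell1-div-ell2} plus the tools in \cref{sec:complete-toolbox}) it suffices to treat the normalized case: show \groupl{s_1}{1} $\leq$ \groupl{s_2}{1} whenever $\pfactors{s_2} \subseteq \pfactors{s_1}$, where $s_1,s_2$ are squarefree. By the already-established $\amper$-decomposition and \cref{lem:ppa-k1_in_ppa-r1} style arguments, this in turn follows from the single-prime building block: for a prime $p$ dividing $s_1$, show \groupl{p}{1} $\leq$ \groupl{s_2}{1}. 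Actually the cleanest route is to directly prove \groupl{p}{1} $\subseteq$ \ppakl{q}{1} fails in general, so instead I would keep the full squarefree $s_1, s_2$ and build the reduction in one shot: given an instance $(C_1, m_1)$ of \groupl{s_1}{1}, I want to produce an instance $(C_2, m_2)$ of \groupl{s_2}{1}.

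The core construction is the ``padding and tensoring'' idea from Beame et al. Since every prime factor of $s_2$ divides $s_1$, I can iteratively take disjoint unions of copies of the \groupl{s_1}{1} instance and glue trivial/singleton blocks to adjust the total count, exactly as in \cref{lem:gcdkell1-div-ell2} and the proof sketch of \cref{thm:ppakl-structure}, so that the set size becomes $\equiv 1 \bmod s_2$; and I can adjust block sizes so that a genuine ``good'' block (size $s_1$) in the \groupl{s_1}{1} instance, when its elements are replaced by appropriately many parallel copies, becomes a block whose size is a multiple of $s_2$. The key number-theoretic fact making this possible is precisely $\pfactors{s_2}\subseteq\pfactors{s_1}$: it guarantees that $s_2 \mid s_1^{t}$ for $t$ large enough, so replacing each element by $s_1^{t-1}$ copies turns a size-$s_1$ block into a size-$s_1^t$ block divisible by $s_2$, while a block of size $\neq s_1$ (and not dividing $s_1$, after preprocessing, so coprime-ish) cannot become divisible by $s_2$. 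Then any block of the constructed instance whose size fails to divide $s_2$ must trace back to a block of the original instance whose size fails to divide $s_1$, yielding a solution to $(C_1,m_1)$. Finally I would wrap the multi-block bookkeeping into a polynomial-time circuit $C_2$ via an explicit bijection onto an initial segment of $\{0,1\}^{n'}$, exactly as in the \groupl{k}{\ell} $\leq$ \bipal{k}{\ell} reduction of \cref{sec:complete-problems-proof}.

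The main obstacle I expect is the ``size $\neq s_1$ cannot become divisible by $s_2$'' direction: replacing elements by copies multiplies block sizes by a fixed constant, so a block of size $d$ becomes size $d \cdot s_1^{t-1}$, and I need that $s_2 \mid d\cdot s_1^{t-1}$ forces $d$ to already ``count correctly'' modulo the primes of $s_2$. Because $s_2$ is squarefree with $\pfactors{s_2}\subseteq\pfactors{s_1}$, for each prime $p \mid s_2$ the factor $s_1^{t-1}$ already supplies all needed copies of $p$, so $s_2 \mid d\cdot s_1^{t-1}$ for \emph{every} $d$ — meaning this naive tensoring destroys information. The fix, which is the real content borrowed from Beame et al., is to not uniformly replace elements by copies but to use the more delicate recursive construction where the \groupl{s_1}{1} instance is composed with a fixed gadget instance witnessing the counting principle for the ``extra'' primes, so that the composed block structure multiplies the relevant residues rather than collapsing them; getting this composition to be polynomial-time computable locally (the circuit $C_2$ must compute a block's elements given one element, without enumerating anything exponential) is the delicate part and is where I would spend the most care, following the template of \cref{lem:ppa-k1_in_ppa-r1} and the proof of \cref{thm:ppakl-structure}.
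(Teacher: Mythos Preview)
Your high-level plan matches the paper's: normalize to the $\ell=1$ case and then prove $\ppakl{s_1}{1}\subseteq\ppakl{s_2}{1}$ under the hypothesis $\pfactors{s_2}\subseteq\pfactors{s_1}$, adapting Beame et al. But two of your steps are not actually justified by what you cite, and one is circular.

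First, the normalization. You claim that \cref{lem:gcdkell1-div-ell2} together with the equivalent complete problems of \cref{sec:complete-toolbox} lets you pass from $\ppakl{k_i}{\ell_i}$ to $\ppakl{s_i}{1}$ with $s_i=k_i/\gcd(k_i,\ell_i)$. They do not: \cref{lem:gcdkell1-div-ell2} only moves $\ell$ while keeping $k$ fixed, and the problems in \cref{sec:complete-toolbox} are just alternative complete problems for the \emph{same} class. The paper needs two separate lemmas here: an easy one (\cref{lem:ppa-kl_in_ppa-krlr}, multiply all edges by $r$) to get $\ppakl{s_2}{1}\subseteq\ppakl{k_2}{\gcd(k_2,\ell_2)}$, and a genuinely non-trivial one (\cref{lem:ppa-kll_in_ppa-k1}) to get $\ppakl{k_1}{\gcd(k_1,\ell_1)}\subseteq\ppakl{s_1}{1}$. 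The latter is the Beame et al.\ binomial argument on $\ell$-subsets, and it is not a consequence of anything you invoked. Relatedly, your assertion that $s_1,s_2$ are squarefree is simply false (e.g.\ $k=8$, $\ell=2$ gives $s=4$); if you meant ``reduce further to the squarefree radical'', that step is itself an instance of the inclusion you are trying to prove, so it is circular. The appeal to \cref{thm:ppakl-structure} is likewise circular, since in the paper that theorem is derived \emph{from} \cref{thm:ppakl-char}.

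Second, the core reduction $\ppakl{s_1}{1}\subseteq\ppakl{s_2}{1}$. You correctly diagnose that naive copy-multiplication collapses all information, but your proposed fix (``compose with a fixed gadget instance witnessing the counting principle for the extra primes'') is not the right construction and is too vague to evaluate. What the paper actually does (\cref{lem:ppa-k1_in_ppa-r1}) is: pick $t$ with $s_2\mid s_1^{t}$, then choose $r\geq t$ via Euler's theorem so that $(2^n-m)^r\equiv 1\bmod s_1^{t}$, and work on the $r$-fold Cartesian product $\{m,\dots,2^n-1\}^r$ with a ``triangular'' circuit $C'(x_1,\dots,x_r)=(C(x_1),\dots,C(x_i),x_{i+1},\dots,x_r)$ where $i\leq t$ is determined by how many leading coordinates are orbit representatives. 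The point is that if every $O(x_j)$ has size $s_1$ then the $C'$-orbit has size exactly $s_1^{t}$, hence is divisible by $s_2$; any short $C'$-orbit exposes some $x_j$ with $|O(x_j)|\neq s_1$. No gadget instance is involved.
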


\begin{proof}
From \cref{lem:gcdkell1-div-ell2} we know that \ppakl{k_i}{\ell_i} = \ppakl{k_i}{\gcd(k_i,\ell_i)} for $i=1,2$. The result then follows from a few technical lemmas proved in \cref{app:sec:char-lemmas}:

\begin{equation*}
\begin{split}
\ppakl{k_1}{\gcd(k_1,\ell_1)} \underset{\text{L~\ref{lem:ppa-kll_in_ppa-k1}}}{\subseteq} \ppakl{\frac{k_1}{\gcd(k_1,\ell_1)}}{1} &\underset{\text{L~\ref{lem:ppa-k1_in_ppa-r1}}}{\subseteq} \ppakl{\frac{k_2}{\gcd(k_2,\ell_2)}}{1}\\
&\underset{\text{L~\ref{lem:ppa-kl_in_ppa-krlr}}}{\subseteq} \ppakl{k_2}{\gcd(k_2,\ell_2)}
\end{split}
\end{equation*}
\end{proof}

\section{Johnson's \texorpdfstring{$\bm{\ppmod{k}}$}{PMODk} Classes and Oracle Separations}\label{sec:pmod}

Inspired by the definition of the PPA-complete problem \textsc{Lonely}~\cite{beame1998relative}, Buss and Johnson~\cite{buss2012propositional} defined TFNP problems called \modd{p} to represent arguments modulo some prime $p$. Their main motivation was to use these problems to show separations (in the type 2 setting) between Turing reductions with $m$ oracle queries and Turing reductions with $m+1$ oracle queries. In his thesis~\cite{johnson2011thesisNPsearch}, Johnson generalised the definition of \modd{k} to any $k \geq 2$ and defined corresponding classes \ppmod{k}. He also proved some separations between these classes and other TFNP classes in the type 2 setting (which yield oracle separations in the standard setting). It seems that Johnson was not aware of Papadimitriou's~\cite{papadimitriou1994complexity} \ppak{p} classes.

In this section, we study the classes \ppmod{k} and prove a characterisation in terms of the classes \ppak{p}. In particular, we show that \ppmod{k} does not capture the full strength of arguments modulo $k$, when $k$ is not a prime power. This characterisation also allows us to use Johnson's separations to obtain some oracle separations involving \ppak{k} and other TFNP classes.

Informally, the problem \modd{k} can be defined as follows. We are given a partition of $\{0,1\}^n$ into subsets and the goal is to find one of these subsets that has size $\neq k$. If $k$ is not a power of $2$, then such a subset must exist. If $k$ is a power of $2$, then we instead consider $\{0,1\}^n \setminus \{0^n\}$ and the problem remains total.

\begin{definition}[\modd{k}~\cite{buss2012propositional,johnson2011thesisNPsearch}]
Let $k \geq 2$. The problem \modd{k} is defined as: given a Boolean circuit $C$ with $n$ inputs and outputs,
\begin{itemize}
\item If $k$ is not a power of $2$: Find
\begin{itemize}
    \item $x \in \{0,1\}^n$ and $d \in \mathbb{N}$ such that $C^d(x)=x$ and $d | k$, $d \neq k$
    \item or $x \in \{0,1\}^n$ such that $C^k(x) \neq x$
\end{itemize}

\item If $k$ is a power of $2$: Let additionally $C(0^n)=0^n$ and find
\begin{itemize}
    \item $x \in \{0,1\}^n \setminus \{0^n\}$ and $d \in \mathbb{N}$ such that $C^d(x)=x$ and $d | k$, $d \neq k$
    \item or $x \in \{0,1\}^n$ such that $C^k(x) \neq x$
\end{itemize}
\end{itemize}
where $C^\ell(x) = C(C(\dots C(x)) \dots)$ ($\ell$ times) and $d | k$ means that $d$ divides $k$.
\end{definition}

\begin{definition}[\ppmod{k}~\cite{johnson2011thesisNPsearch}]
For any $k \geq 2$, the class \ppmod{k} is defined as the set of all \textup{TFNP} problems that many-one reduce to \modd{k}.
\end{definition}

Note that the problem \modd{k} is a special case of our problem \group{k} (which was indeed inspired by this definition). As a result, we immediately get that $\ppmod{k} \subseteq \ppak{k}$. Unless $k$ is a prime power, we don't expect this to hold with equality. The intuition is that restricting the size of the base set to always be a power $2$ has the effect of only achieving a subset of the possible $\ell$-parameter values of \ppakl{k}{\ell}. Namely, only $\ell \in \{2^n \mod k: n \in \mathbb{N}\}$ are achieved (for $k$ not a power of 2).

Johnson proves a lemma~\cite[Lemma 7.4.5]{johnson2011thesisNPsearch} that gives some idea of how the \ppmod{k} classes relate to each other. It can be stated as follows: if $k=p_1p_2 \dots p_r$, where the $p_i$ are distinct primes, then $\ppmod{k} = \cap_i \ppmod{p_i}$. He proves this if all $p_i \neq 2$ and claims that the proof also works if some $p_i=2$. However, if some $p_i=2$ then the proof does not work. This is easy to see, since our results below prove that $\ppmod{6} = \ppmod{3}$ which is not equal to $\ppmod{2} \cap \ppmod{3}$, unless $\ppmod{2} \subseteq \ppmod{3}$. However, Johnson proves that $\ppmod{2} \not \subseteq \ppmod{3}$ in the type 2 setting.

The following result provides a full characterisation of \ppmod{k} in terms of the classes \ppak{p}.

\begin{theorem}\label{thm:pmod-char}
Let $k \geq 2$.
\begin{itemize}
\item If $k$ is not a power of $2$, then $\ppmod{k} = \ppakl{\widetilde{k}}{1} = \cap_{p \in \pfactors{\widetilde{k}}} \ppak{p}$ where $\widetilde{k}$ is the largest odd divisor of $k$.
\item If $k$ is a power of $2$, then $\ppmod{k} = \ppak{2}$.
\end{itemize}
\end{theorem}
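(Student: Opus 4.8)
The plan is to handle the two cases separately, reducing everything to the already-established machinery: \group{k}, \cref{thm:complete-problems} (\group{k} is \ppak{k}-complete, \groupl{k}{\ell} is \ppakl{k}{\ell}-complete), and the structural results \cref{thm:ppakl-structure} and \cref{thm:ppakl-char}. The key observation is that \modd{k} is exactly \group{k} except that the size of the base set is forced to be a power of $2$ (minus $0^n$ when $k$ is a power of $2$), so the achievable value of the ``trivial degree'' $\ell$ is restricted to $\ell \equiv 2^n \bmod k$ for some $n$.

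For the case where $k$ is a power of $2$: here \modd{k} works on $\{0,1\}^n \setminus \{0^n\}$, a set of size $2^n - 1$, and since $\gcd(2^n-1, k)=1$, this is (up to the easy identifications) an instance of \groupl{k}{\ell} with $2^n - m = 1$, hence $\ell = 1$. So \modd{k} reduces to \groupl{k}{1}, giving $\ppmod{k} \subseteq \ppakl{k}{1}$. By \cref{thm:ppakl-structure}, $\ppakl{k}{1} = \ppakl{(\prod_{p \in \pfactors{k,1}} p)}{1}$; since $k$ is a power of $2$, $\pfactors{k,1} = \pfactors{k} = \{2\}$, so this equals $\ppakl{2}{1} = \ppak{2}$. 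For the reverse inclusion $\ppak{2} \subseteq \ppmod{k}$: the \ppa-complete problem \textsc{Lonely} is precisely \modd{2}, and \modd{2} reduces to \modd{k} by the standard ``take the direct product of the partition with a fixed cyclic group of order $k/2$'' trick (each orbit of size $2$ becomes an orbit of size $k$, and a solution downstairs projects to a solution upstairs), so $\ppak{2} = \ppmod{2} \subseteq \ppmod{k}$.

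For the case where $k$ is not a power of $2$: write $k = 2^a \widetilde{k}$ with $\widetilde{k}$ the largest odd divisor, $\widetilde{k} > 1$. Here \modd{k} works on all of $\{0,1\}^n$, a set of size $2^n$, so it is an instance of \groupl{k}{\ell} with $\ell \equiv 2^n \bmod k$. Thus \modd{k} reduces to the $\amper$ over $n$ of these \groupl{k}{2^n \bmod k}, but in fact only finitely many distinct values of $2^n \bmod k$ occur, and by \cref{lem:gcdkell1-div-ell2} only $\gcd(k, 2^n) = 2^{\min(a,n)}$ matters, which ranges over $\{1, 2, 4, \dots, 2^a\}$. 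Hence $\ppmod{k}$ sits between the weakest such class and a finite $\amper$; the cleanest route is to show both $\ppmod{k} \subseteq \ppakl{k}{2^a}$ and $\ppakl{k}{2^a} \subseteq \ppmod{k}$. The inclusion $\ppmod{k} \subseteq \ppakl{k}{2^a}$ follows because for $n \geq a$ we have $\gcd(k, 2^n) = 2^a$, and for the finitely many $n < a$ the value $\gcd(k,2^n) = 2^n$ divides $2^a$, so \cref{lem:gcdkell1-div-ell2} gives $\ppakl{k}{2^n} \subseteq \ppakl{k}{2^a}$; combining over all $n$ shows every \modd{k} instance reduces into \ppakl{k}{2^a}. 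The reverse, $\ppakl{k}{2^a} \subseteq \ppmod{k}$, is the substantive direction: given an instance of \groupl{k}{2^a}, i.e.\ a partition of a set of size $m$ with $m \equiv 2^a \bmod k$, one pads it with singleton-orbit-free blocks so that the total size becomes exactly a power of $2$ — this is possible because $\gcd(k,2^n)$ hits $2^a$ and one can adjust the size by multiples of $k$ while keeping it $\equiv 2^a \bmod k$, landing on some $2^N$ for $N$ large — thereby producing a genuine \modd{k} instance whose solutions map back. Finally, $\pfactors{k, 2^a} = \pfactors{k / 2^a} = \pfactors{\widetilde{k}}$, so \cref{thm:ppakl-structure} gives $\ppakl{k}{2^a} = \ppakl{(\prod_{p \in \pfactors{\widetilde{k}}} p)}{1} = \widetilde{k}' $-version; and since $\widetilde{k}$ and $\prod_{p \in \pfactors{\widetilde{k}}} p$ have the same prime factors, $\ppakl{\widetilde{k}}{1}$ equals the same intersection $\bigcap_{p \in \pfactors{\widetilde{k}}} \ppak{p}$, which is the claimed form.

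The main obstacle I expect is the padding argument in the direction $\ppakl{k}{2^a} \subseteq \ppmod{k}$: one must verify that a set of size $m \equiv 2^a \pmod k$ can always be enlarged, by adding a polynomial number of full $k$-orbits (and possibly a few smaller blocks that are absorbed), to a set of size exactly $2^N$ for some efficiently-computable $N$ — this requires checking that $2^N \equiv 2^a \pmod k$ is solvable for large $N$ (true since $2^N \bmod k$ is eventually periodic and hits $2^a$) and that $2^N \geq m$ can be arranged while keeping the construction polynomial-size, together with the usual care about encoding the enlarged partition by a circuit. Everything else is routine bookkeeping using \cref{lem:gcdkell1-div-ell2}, \cref{thm:complete-problems}, and \cref{thm:ppakl-structure}.
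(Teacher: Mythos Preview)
Your non-power-of-$2$ argument is correct and is a mild reorganisation of the paper's: the paper first proves $\ppmod{2k}=\ppmod{k}$ by direct mutual reductions (stripping all factors of $2$) and then shows $\ppmod{k}=\ppakl{k}{1}$ for odd $k$ via the same padding idea you describe, whereas you keep $k=2^a\widetilde{k}$ intact and establish $\ppmod{k}=\ppakl{k}{2^a}$ in one shot before invoking \cref{thm:ppakl-structure}. Both routes rest on the same two ingredients --- \cref{lem:gcdkell1-div-ell2} for the $\subseteq$ direction, and ``find $N$ with $2^N$ in the right residue class mod $k$, then fill with $k$-orbits'' for the $\supseteq$ direction --- so the difference is organisational. (Minor slip in your power-of-$2$ case: the set $\{0,1\}^n\setminus\{0^n\}$ has $m=1$ and $2^n-m=2^n-1$, so $\ell\equiv k-1$, not $\ell=1$; harmless, since $\gcd(k,k-1)=1$ gives the same class.)

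The one genuine gap is the inclusion $\ppak{2}\subseteq\ppmod{k}$ for $k=2^r$, $r\ge 2$. Taking the direct product of a \modd{2} instance with a cycle of length $k/2$ does \emph{not} produce a valid \modd{k} instance: the trivial singleton $\{0^n\}$ becomes a $(k/2)$-orbit, but the \modd{k} format requires $0^N$ to be a fixed point. Forcing $0^N$ to be a singleton leaves the other $k/2-1$ elements of that orbit as a spurious solution carrying no information about the original instance, so no solution-map can recover a \modd{2} solution from it. The arithmetic confirms the obstruction: embedding $(\{0,1\}^n\setminus\{0^n\})\times\{0,\dots,k/2-1\}$ into $\{0,1\}^N\setminus\{0^N\}$ and padding with $k$-orbits would require $2^N-1-(2^n-1)2^{r-1}\equiv 0\pmod{2^r}$, but this quantity is $\equiv 2^{r-1}-1$, which is nonzero for all $r\ge 2$. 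The paper instead reduces \groupl{2^r}{(2^r-1)} to \modd{2^r}: since $2^n-m\equiv 2^r-1$ forces $m\equiv 1\pmod{2^r}$, one can partition $\{0,\dots,m-1\}$ into $2^r$-blocks plus the single leftover $0^n$, obtaining a genuine \modd{2^r} instance on $\{0,1\}^n$; combined with $\ppakl{2^r}{(2^r-1)}=\ppak{2}$ from \cref{thm:ppakl-structure}, this closes the gap.
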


The proof of \cref{thm:pmod-char} is given below in \cref{sec:pmod-char-proof}.

\begin{corollary}
In particular, we have:
\begin{itemize}
\item for all primes $p$ and $r \geq 1$, $\ppmod{p^r} = \ppak{p^r} = \ppak{p}$
\item for all $k \geq 2$, $\ppmod{2k} = \ppmod{k}$
\item for all odd $k \geq 3$, $\ppmod{k} = \ppakl{k}{1} = \cap_{p \in \pfactors{k}} \ppak{p}$
\end{itemize}
\end{corollary}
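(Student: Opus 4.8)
The plan is to derive each bullet directly from \cref{thm:pmod-char}. The first bullet, for primes $p$ and $r \geq 1$, comes by applying \cref{thm:pmod-char} to $k = p^r$. If $p$ is odd, then $p^r$ is not a power of $2$, its largest odd divisor is $\widetilde{k} = p^r$, and $\pfactors{p^r} = \{p\}$, so $\ppmod{p^r} = \ppakl{p^r}{1} = \ppak{p}$; moreover $\ppakl{p^r}{1} = \ppak{p^r}$ can be read off from \cref{thm:ppakl-structure} with $\ell = 1$ since $\pfactors{p^r, 1} = \pfactors{p^r} = \{p\}$ gives $\ppakl{p^r}{1} = \bigcap_{q \in \{p\}} \ppak{q} = \ppak{p}$, and by the third item of \cref{cor:ppak-structure}, $\ppak{p^r} = \ppak{p}$. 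If $p = 2$, then $p^r$ is a power of $2$, so \cref{thm:pmod-char} gives $\ppmod{2^r} = \ppak{2}$, and $\ppak{2^r} = \ppak{2}$ again by \cref{cor:ppak-structure}; this matches the claim since $\ppak{2^r} = \ppak{2} = \ppak{p}$.

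For the second bullet, I would split on whether $k$ is a power of $2$. If $k = 2^s$ is a power of $2$, then so is $2k = 2^{s+1}$, and \cref{thm:pmod-char} gives $\ppmod{2k} = \ppak{2} = \ppmod{k}$. If $k$ is not a power of $2$, then $2k$ is also not a power of $2$, and the largest odd divisor of $2k$ equals the largest odd divisor of $k$ (multiplying by $2$ does not change the odd part); denote it $\widetilde{k}$. Then \cref{thm:pmod-char} applied to both $k$ and $2k$ yields $\ppmod{2k} = \ppakl{\widetilde{k}}{1} = \ppmod{k}$.

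The third bullet is the special case of the first part of \cref{thm:pmod-char} where $k$ is odd: then $k$ is certainly not a power of $2$, and its largest odd divisor is $\widetilde{k} = k$ itself, so \cref{thm:pmod-char} directly gives $\ppmod{k} = \ppakl{k}{1} = \bigcap_{p \in \pfactors{k}} \ppak{p}$.

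None of these steps presents a real obstacle; the only mild care needed is the arithmetic observation that the largest odd divisor is unchanged under multiplication by $2$, and the bookkeeping that links $\ppakl{p^r}{1}$ to $\ppak{p^r}$ and then to $\ppak{p}$ via \cref{thm:ppakl-structure} and \cref{cor:ppak-structure}. Since this corollary is purely a matter of specializing already-proved theorems, the ``proof'' is essentially a short verification and I would present it as such.
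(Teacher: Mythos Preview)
Your proposal is correct and matches the paper's intent: the corollary is stated without an explicit proof, being immediate from \cref{thm:pmod-char} together with \cref{cor:ppak-structure} (and \cref{thm:ppakl-structure}), which is exactly the derivation you give. The only minor remark is that the paper actually establishes $\ppmod{2k}=\ppmod{k}$ directly by an explicit reduction inside the proof of \cref{thm:pmod-char}, but once that theorem is in hand your derivation via the ``largest odd divisor'' observation is the intended reading of the corollary.
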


If $k$ is a prime power, then \ppmod{k} is the same as \ppak{k}. However, for other values of $k$, we argue that \ppmod{k} fails to capture the full strength of arguments modulo $k$. For example, $\ppmod{15} = \ppakl{15}{1} = \ppak{3} \cap \ppak{5}$, whereas $\ppak{15} = \ppak{3} \amper \ppak{5}$. This means that \ppak{15} can solve any problem that lies in \ppak{3} or \ppak{5}, while \ppmod{15} can only solve problems that lie \emph{both} in \ppak{3} and \ppak{5}. In particular, if $\ppmod{15} = \ppak{15}$, then it would follow that $\ppak{3} = \ppak{5}$, which is not believed to hold (see oracle separations below). Even worse perhaps, is the fact that $\ppmod{2k} = \ppmod{k}$ for any $k \geq 2$. In particular, this means that $\ppmod{6} = \ppmod{3}$, which indicates that \ppmod{6} does not really capture arguments modulo $6$.

Nevertheless, Johnson's oracle separation results (obtained from the corresponding type 2 separations as in~\cite{beame1998relative}) also yield corresponding results for the \ppak{k} classes (using \cref{thm:pmod-char}). We briefly mention a few of the results obtained this way. See Johnson~\cite[Chapter 8]{johnson2011thesisNPsearch} for additional results. Relative to any generic oracle (see \cite{beame1998relative}):
\begin{itemize}
    \item $\ppak{p} \not \subseteq \ppak{q}$ for any distinct primes $p,q$
    \item $\ppak{k} \not \subseteq \textup{PPP}$, $\ppak{k} \not \subseteq \textup{PLS}$, $\ppak{k} \not \subseteq \textup{PPADS}$ for any $k \geq 2$
    \item $\textup{PPP} \not \subseteq \ppak{p}$, $\textup{PLS} \not \subseteq \ppak{p}$ for any prime $p$
\end{itemize}

\subsection{Proof of \texorpdfstring{\cref{thm:pmod-char}}{Theorem~\ref*{thm:pmod-char}}}\label{sec:pmod-char-proof}

For $k=2$, \modd{2} corresponds to the PPA-complete problem \textsc{Lonely}~\cite{beame1998relative}, and thus $\ppmod{2} = \ppa = \ppak{2}$. Let $r \geq 2$. Consider an instance $(C,m)$ of \groupl{2^r}{(2^r-1)} on the set $\{0,1\}^n$. Without loss of generality, assume $n \geq r$. Then $2^n = 0 \mod 2^r$ and thus $m = 2^n - (2^n-m) = -(2^r-1) \mod 2^r = 1 \mod 2^r$. This means that we can (efficiently) partition $\{0,1,\dots, m-1\}$ into subsets of size $2^r$, leaving only $0=0^n$ out. Thus, we have reduced \groupl{2^r}{(2^r-1)} to \modd{2^r}. Since $\ppakl{2^r}{(2^r-1)} = \ppak{2}$ (\cref{thm:ppakl-structure}), we obtain $\ppak2 \subseteq \ppmod{2^r}$. On the other hand we also have $\ppmod{2^r} \subseteq \ppak{2^r} = \ppak{2}$ by \cref{cor:ppak-structure}.

Consider some $k \geq 3$ that is not a power of $2$. First, let us show that $\ppmod{2k} = \ppmod{k}$. \modd{2k} reduces to \modd{k} by splitting every subset into two subsets of size $k$ (or less, if the subset has size $< 2k$). Conversely, consider an instance of \modd{k} on the set $\{0,1\}^n$. Make a copy of the instance, thus obtaining an instance on the set $\{0,1\}^{n+1}$. For every subset of the original instance, take the union with its copy. If the subset had size $k$, the new subset has size $2k$. Thus, we have reduced to \modd{2k}.

Let $k \geq 3$ be coprime with $2$. We will show $\ppmod{k} = \ppakl{k}{1}$. Consider an instance of \modd{k} on the set $\{0,1\}^n$. Since $k$ and $2$ are coprime, there exists $i \in \{0, \dots, k-1\}$ such that $2^{n+i} = 1 \mod k$ (e.g., by using Euler's theorem). Thus, we take $2^i$ copies of the instance and obtain an instance on the set $\{0,1\}^{n+i}$, which is an instance of \groupl{k}{1} (with $m=0$), since $2^{n+i} = 1 \mod k$. Conversely, consider an instance $(C,m)$ of \groupl{k}{1} on the set $\{0,1\}^n$. As before, there exists $i \in \{0, \dots, k-1\}$ such that $2^{n+i} = 1 \mod k$. We construct an instance $C'$ of \modd{k} on $\{0,1\}^{n+i}$ as follows. The element $x \in \{0,1\}^n$ of the original instance corresponds to the element $1^ix \in \{0,1\}^n$ of the new instance. If $x \geq m$, set $C'(1^ix) = 1^iC(x)$. The number of elements that have not yet been assigned to a subset is $m + (2^i-1)2^n = (m-2^n)+2^{n+i} = 0 \mod k$. Thus, we can efficiently partition them into subsets of size $k$ without introducing any solution. We have obtained an instance of \modd{k}.

\section{Many-one vs Turing Reductions}\label{sec:turing-reduction}

\begin{theorem}\label{thm:ppa-p-turingclosed}
For any prime $p \geq 2$, \ppak{p} is closed under Turing reductions.
\end{theorem}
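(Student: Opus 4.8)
The plan is to show that a polynomial-time oracle machine making multiple queries to \bipa{p} can be simulated by a single query to \bipa{p}. The standard approach for results of this type is to build one big instance that "concatenates" the instances appearing in the various oracle queries. The key difficulty is that the queries are \emph{adaptive}: the $i$-th instance queried depends on the answers to queries $1,\dots,i-1$. The fix, as in analogous PPA/PPAD results, is to note that a Turing reduction of polynomial time makes at most polynomially many queries, and we can encode the entire computation tree of the oracle machine into a single graph instance, where "edges get activated" only along the actual computation path.

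Here is the concrete strategy. First I would switch to the complete problem \group{p} (or \imba{p}) from \cref{thm:complete-problems}, whichever is most convenient for gluing; I expect \group{p} to be the cleanest since combining partitions is easy. Given the oracle machine $M$ running in time $T = \text{poly}(|x|)$, on input $x$ it produces a first query instance $I_1$; every possible answer $y_1$ to $I_1$ leads to a next instance $I_2(y_1)$, and so on, up to depth $T$. Since answers have polynomial length, there are at most $2^{\text{poly}}$ possible answers, but each instance $I_j$ has vertex set of size at most $2^{\text{poly}}$, so the whole tree can be encoded by a circuit: a vertex of the combined instance is a tuple (query index $j$, transcript of answers $y_1,\dots,y_{j-1}$ so far, vertex $v$ in $I_j$). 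The crucial point is that, for each fixed consistent transcript, exactly one "trivial solution" vertex is present with the bad degree $\ell_j \bmod p$; the idea is to arrange the gluing so that along the \emph{actual} computation path the trivial-solution degrees telescope/multiply to a single global trivial solution, while all off-path branches contribute only vertices whose degree is $0 \bmod p$ (or are isolated). Concretely, for a transcript that has not yet been "realized" by the machine's computation, we make all corresponding vertices isolated; we use the machine itself (run for $T$ steps) to decide, in polynomial time, whether a given partial transcript is the one actually produced.

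The heart of the argument is the combination step, and it mirrors the proof of \cref{thm:ppakl-structure}: to merge instances $(C_1,m_1),\dots,(C_T,m_T)$ of \group{p} — here possibly appearing sequentially along the path — into one instance of \group{p}, one first normalizes each so that $2^n - m_j \equiv 1 \pmod p$ (possible since $p$ is prime, using the copy-and-pad trick of \cref{lem:ppa-kll_in_ppa-k1} and the proof of \cref{lem:gcdkell1-div-ell2}), then takes the product: an element of the merged instance is a tuple $(x_1,\dots,x_T)$ with $x_j$ ranging over instance $j$, and two tuples are equivalent iff they are equivalent coordinatewise. A tuple lies in a block of size $p^T$ iff every coordinate lies in a block of size $p$ in its own instance; since $(2^n-m_1)\cdots(2^n-m_T) \equiv 1 \pmod{p^T}$, this merged instance is a valid \group{p^T} instance, and \ppak{p^T} = \ppak{p} by \cref{cor:ppak-structure}. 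Any solution of the merged instance — a tuple not in a block of size $p^T$ — has some coordinate $x_j$ not in a block of size $p$, which is a genuine solution to instance $j$; decoding it and feeding it back into $M$ lets us continue the simulation and extract $M$'s final output.

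The main obstacle I anticipate is handling adaptivity cleanly without blowing up the instance: naively the computation tree has exponential size in the number of queries. The resolution is that we never materialize the whole tree — we only need a circuit that, given a tuple (index, transcript, vertex), (i) checks in polynomial time whether that transcript is the one $M$ actually produces (by simulating $M$ for $T$ steps and feeding it the claimed answers, verifying each claimed answer is indeed a valid solution of the instance $M$ queries at that step), and (ii) if so, computes the local neighborhood/partition structure there. Off-path tuples map to isolated vertices. This keeps everything polynomial-size. A secondary technicality is that the normalization $2^n - m_j \equiv 1 \pmod p$ must be done uniformly and the product construction must produce outputs of polynomial length, which is fine since $T$ is polynomial and each instance has polynomial description size; I would relegate these bookkeeping details, and the argument that off-path branches introduce no spurious solutions, to the routine-verification category, exactly as the excerpt does for \cref{thm:ppakl-structure}.
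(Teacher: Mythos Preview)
The proposal has a genuine gap in the combination step. The product construction you describe (borrowed from the proof of \cref{thm:ppakl-structure}) combines instances $(C_1,m_1),\dots,(C_T,m_T)$ so that a solution of the merged instance yields a solution to \emph{some} $C_j$; that is exactly what is needed for an intersection statement like $\bigcap_p \ppak{p} \subseteq \ppakl{k}{1}$. But simulating the Turing machine $M$ to completion requires answers to \emph{every} query along the computation path. A many-one reduction gets a single oracle call, so ``feeding it back into $M$ lets us continue the simulation'' is not available to you: once you extract one answer from the merged instance you are done, and you still lack the other $T-1$ answers. (This is precisely the distinction between $\amper_j R_j$ and $\bigotimes_j R_j$ discussed around \cref{thm:turing-separation}.) A smaller issue: $(2^n-m_1)\cdots(2^n-m_T)\equiv 1 \pmod{p^T}$ does not follow from each factor being $\equiv 1 \pmod p$; only $\equiv 1 \pmod p$ survives the product.

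The paper handles both the adaptivity and the all-answers issues with a different gluing, working with \imba{p} rather than \group{p}. The key preprocessing step---and this is where primality of $p$ is actually used---normalizes every queried instance so that the start vertex has in-degree $0$, out-degree $1$ (imbalance $-1$) and every solution has in-degree $1$, out-degree $0$ (imbalance $+1$). Start and solution are now plug-compatible: for each solution $t_1$ of the first instance $G_1$, one adds a single directed edge from $t_1$ to the start vertex of the copy of $G_2(t_1)$, making both endpoints balanced. Iterating, the only unbalanced vertices in the big graph are the original start of $G_1$ and the solutions at depth $T$, so any solution to the combined instance is a full tuple $(t_1,\dots,t_T)$ giving a consistent answer to every query---exactly what is needed to run $M$ to completion. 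Your (index, transcript, vertex) encoding from the last paragraph is essentially the vertex set of this tree construction; the missing ingredient is the normalization-plus-edge mechanism that makes the intermediate solutions disappear, not the product mechanism, which does not.
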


In particular, \ppak{p^r} = \ppak{p} is also closed under Turing reductions. The proof of \cref{thm:ppa-p-turingclosed} can be found in \cref{sec:turing-closed-proof}. Furthermore, we also obtain:

\begin{corollary}\label{cor:ppa-kl-turingclosed}
For all $k \geq 2$ and $0 < \ell < k$, \ppakl{k}{\ell} is closed under Turing reductions.
\end{corollary}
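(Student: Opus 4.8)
\textbf{Proof proposal for \cref{cor:ppa-kl-turingclosed}.}
The plan is to reduce the statement to \cref{thm:ppa-p-turingclosed} via the structural characterisation in \cref{thm:ppakl-structure}, which tells us that $\ppakl{k}{\ell} = \bigcap_{p \in \pfactors{k,\ell}} \ppak{p}$. So the goal becomes: show that a finite intersection of classes, each of which is closed under Turing reductions, is itself closed under Turing reductions. This is not true for arbitrary TFNP subclasses, so the argument must use more than just the abstract closure hypothesis --- in particular I expect to use the fact that each $\ppak{p}$ has a specific complete problem, and that the $\amper$ operation behaves well.

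First I would handle the degenerate case $\pfactors{k,\ell} = \emptyset$, i.e.\ $k \mid \ell$ --- but since $0 < \ell < k$ this cannot happen, so $\pfactors{k,\ell}$ is always nonempty; write it as $\{p_1,\dots,p_d\}$. Next, I would observe that $\ppakl{k}{\ell}$ has a single complete problem, namely (say) \groupl{k}{\ell} by \cref{thm:complete-problems}, and similarly \groupl{(\prod_i p_i)}{1} is complete for it by \cref{thm:ppakl-structure}. So it suffices to show that the class of problems many-one reducible to \groupl{s}{1} (with $s = p_1 \cdots p_d$) is closed under Turing reductions. Now the key point: a problem $R$ that Turing-reduces to $\ppakl{k}{\ell}$ Turing-reduces, in particular, to each $\ppak{p_i}$ individually (since $\ppakl{k}{\ell} \subseteq \ppak{p_i}$, an oracle for the former can be simulated by one for the latter --- wait, that inclusion goes the wrong way for simulation). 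Let me instead argue directly: since $\ppakl{k}{\ell} \subseteq \ppak{p_i}$, any single oracle call to $\ppakl{k}{\ell}$ is also a legal oracle call to $\ppak{p_i}$; hence a Turing reduction to $\ppakl{k}{\ell}$ is also a Turing reduction to $\ppak{p_i}$. By \cref{thm:ppa-p-turingclosed}, $R \in \ppak{p_i}$ for every $i$, hence $R \in \bigcap_i \ppak{p_i} = \ppakl{k}{\ell}$, which is exactly what we want.

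The remaining gap --- and the step I expect to be the main obstacle --- is the claim ``a Turing reduction to $\ppakl{k}{\ell}$ is also a Turing reduction to $\ppak{p_i}$.'' This needs a little care because the oracle for $\ppakl{k}{\ell}$ answers queries of a specific syntactic form, and to replay such a reduction against the $\ppak{p_i}$ oracle one must, at each query, many-one transform the $\ppakl{k}{\ell}$-query into a $\ppak{p_i}$-query, call the $\ppak{p_i}$ oracle, and translate the answer back --- this is exactly the content of the many-one reduction $\ppakl{k}{\ell} \subseteq \ppak{p_i}$ witnessed by functions $f_i,g_i$, composed in-line with the oracle machine. Since $f_i,g_i$ are polynomial-time computable, the composed machine is still a polynomial-time oracle machine making polynomially many $\ppak{p_i}$-queries, i.e.\ a genuine Turing reduction. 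Once this ``in-lining'' observation is spelled out, the corollary follows. I would also remark that the same argument shows, more generally, that $\bigcap_{p \in S} \ppak{p}$ is closed under Turing reductions for any finite set of primes $S$, and note as a contrast that this is precisely why \ppak{k} for $k$ not a prime power (an $\amper$ rather than an $\cap$) is \emph{not} expected to be Turing-closed.
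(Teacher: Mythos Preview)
Your proposal is correct and follows essentially the same argument as the paper: use \cref{thm:ppakl-structure} to write $\ppakl{k}{\ell}=\bigcap_i \ppak{p_i}$, observe that a Turing reduction to $\ppakl{k}{\ell}$ yields (via the many-one inclusion $\ppakl{k}{\ell}\subseteq\ppak{p_i}$) a Turing reduction to each $\ppak{p_i}$, then apply \cref{thm:ppa-p-turingclosed} and intersect. Your explicit ``in-lining'' justification for why the inclusion turns a Turing reduction into a Turing reduction is a useful elaboration that the paper leaves implicit, and your momentary worry about the direction of the inclusion is resolved correctly.
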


\begin{proof}[Proof of \cref{cor:ppa-kl-turingclosed}]
Using \cref{thm:ppakl-structure}, we have $\ppakl{k}{\ell} = \bigcap_{p =1}^d \ppak{p_i}$, where we let $\{p_1, \dots, p_d\} = \pfactors{k,\ell}$. Consider a Turing reduction from some problem to \ppakl{k}{\ell}. Since $\ppakl{k}{\ell} \subseteq \ppak{p_i}$, this yields a Turing reduction to \ppak{p_i}, in particular. By \cref{thm:ppa-p-turingclosed}, it follows that there exists a many-one reduction to \ppak{p_i}, i.e., the problem lies in \ppak{p_i}. Since this holds for all $p_i$, the result follows.
\end{proof}

If $k$ is not a prime power, then it is not known whether \ppak{k} is closed under Turing reductions. Using our results from \cref{sec:pmod}, we can actually provide an oracle separation between \ppak{k} and the Turing-closure of \ppak{k}, i.e., an oracle under which \ppak{k} is not closed under Turing reductions. Let $R_1, \dots, R_k$ be TFNP problems. Following Johnson~\cite{johnson2011thesisNPsearch} we define $\bigotimes_{j=1}^k R_j$ as the problem: given instances $(I_1,\dots,I_k)$, where $I_j$ is an instance of $R_j$, solve $I_j$ for all $j$. As we did with the $\amper$ operation, with a slight abuse of notation, we can also use the operation $\otimes$ with the \ppak{k} classes. In \cite[Theorem 7.6.1]{johnson2011thesisNPsearch}, Johnson proved that for $m \geq 2$ and distinct primes $p_1, \dots, p_m$, $\bigotimes_{i=1}^m$ \modd{p_i} does not many-one reduce to $\amper_{i=1}^m$ \modd{p_i} in the type 2 setting. Together with our \cref{thm:ppak-structure,thm:pmod-char} this yields:

\begin{theorem}\label{thm:turing-separation}
Let $k \geq 2$ not a power of a prime. Relative to any generic oracle, it holds that $\bigotimes_{p \in \pfactors{k}} \ppak{p} \not\subseteq \ppak{k}$. In particular, relative to any generic oracle, \ppak{k} is not closed under Turing reductions.
\end{theorem}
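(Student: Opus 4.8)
The plan is to deduce \cref{thm:turing-separation} from Johnson's type 2 separation of $\bigotimes_{i=1}^m \modd{p_i}$ from $\amper_{i=1}^m \modd{p_i}$ (\cite[Theorem 7.6.1]{johnson2011thesisNPsearch}) by translating between the \modd{p} and \ppak{p} worlds using the characterisations already established. Write $\pfactors{k} = \{p_1, \dots, p_m\}$ with $m \geq 2$ since $k$ is not a prime power. First I would observe that since each $p_i$ is prime, \cref{thm:pmod-char} gives $\ppmod{p_i} = \ppak{p_i}$, and moreover the reductions witnessing this equivalence (taking copies of an instance and gluing, as in the proof in \cref{sec:pmod-char-proof}) are many-one reductions in the type 2 setting, because they do not inspect the inner workings of the circuits. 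Hence $\bigotimes_{i=1}^m \ppak{p_i}$ and $\bigotimes_{i=1}^m \modd{p_i}$ have many-one equivalent complete problems (componentwise application of the equivalences), and likewise $\amper_{i=1}^m \ppak{p_i}$ is many-one equivalent to $\amper_{i=1}^m \modd{p_i}$, all in the type 2 setting.

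Next I would invoke \cref{thm:ppak-structure}, which says $\ppak{k} = \amper_{p \in \pfactors{k}} \ppak{p}$. The reductions underlying \cref{thm:ppak-structure} (and the lemmas feeding into it, \cref{lem:ppakl-amper,thm:ppakl-structure,thm:ppakl-char}) are again many-one reductions that treat the circuits as black boxes, so this equality also holds in the type 2 setting. Combining the two previous steps, in the type 2 world the complete problem for $\ppak{k}$ is many-one equivalent to the complete problem for $\amper_{i=1}^m \modd{p_i}$. If $\bigotimes_{p \in \pfactors{k}} \ppak{p}$ did many-one reduce to $\ppak{k}$ in the type 2 setting, then by the previous equivalences $\bigotimes_{i=1}^m \modd{p_i}$ would many-one reduce to $\amper_{i=1}^m \modd{p_i}$ in the type 2 setting, contradicting Johnson's theorem. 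Therefore $\bigotimes_{p \in \pfactors{k}} \ppak{p} \not\subseteq \ppak{k}$ in the type 2 setting, and hence, by the standard translation of type 2 separations into generic oracle separations (as in \cite{beame1998relative} and discussed in \cref{sec:prelims}), the same separation holds relative to any generic oracle.

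Finally, to conclude that \ppak{k} is not closed under Turing reductions relative to a generic oracle, I would note that $\bigotimes_{p \in \pfactors{k}} \ppak{p}$ trivially Turing-reduces to \ppak{k}: on input $(I_1, \dots, I_m)$ one queries a \ppak{k} oracle $m$ times, once for (a suitable many-one image of) each $I_i$ via $\ppak{p_i} \subseteq \ppak{k}$ (which follows from \cref{cor:ppak-structure} since $\pfactors{p_i} = \{p_i\} \subseteq \pfactors{k}$), and assembles the $m$ answers. Since this Turing reduction exists but the corresponding many-one reduction does not (relative to the generic oracle), \ppak{k} fails to be closed under Turing reductions relative to that oracle.

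The main obstacle, and the point I would be most careful about, is checking that \emph{all} the reductions invoked — the $\ppmod{p} = \ppak{p}$ equivalences from \cref{sec:pmod-char-proof}, the structural equalities from \cref{thm:ppak-structure} and its supporting lemmas, and the obvious componentwise reductions for $\otimes$ and $\amper$ — are genuinely type 2 reductions, i.e.\ they only call the input circuits as oracles and never examine their gates. As already emphasised in \cref{sec:prelims}, every reduction in this paper has that property, so this is a matter of bookkeeping rather than a genuine difficulty, but it is essential for transporting Johnson's type 2 lower bound through the chain of equivalences; a secondary subtlety is ensuring that the $\otimes$-to-$\otimes$ and $\amper$-to-$\amper$ translations respect the bundling structure (each coordinate is reduced independently, which is fine since the equivalences are many-one and coordinatewise).
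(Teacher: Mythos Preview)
Your proposal is correct and follows essentially the same approach as the paper: invoke Johnson's type~2 separation of $\bigotimes_i \modd{p_i}$ from $\amper_i \modd{p_i}$, translate via $\ppmod{p}=\ppak{p}$ (\cref{thm:pmod-char}) and $\ppak{k}=\amper_{p\in\pfactors{k}}\ppak{p}$ (\cref{thm:ppak-structure}), and then observe that $\bigotimes_{p}\ppak{p}$ Turing-reduces to $\ppak{k}$ with $|\pfactors{k}|$ queries. Your write-up is more explicit than the paper about verifying that every reduction in the chain is black-box (type~2), which is exactly the right thing to be careful about, but this is elaboration rather than a different argument.
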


$S = \bigotimes_{p \in \pfactors{k}} \ppak{p}$ corresponds to solving \ppak{p} for all prime factors $p$ of $k$ \emph{simultaneously}. In particular, this can be done by using $|\pfactors{k}|$ queries to \ppak{k}, i.e., a Turing reduction to \ppak{k}. Thus, $S$ lies in the Turing closure of \ppak{k}, but not in \ppak{k} (relative to any generic oracle).

\subsection{Proof of \texorpdfstring{\cref{thm:ppa-p-turingclosed}}{Theorem~\ref*{thm:ppa-p-turingclosed}}}\label{sec:turing-closed-proof}

We essentially apply the same technique that was used by Buss and Johnson~\cite{buss2012propositional} to show that PPA, PPAD, PPADS and PLS are closed under Turing reductions.

Let $\Pi$ be a problem that Turing-reduces to some problem in \ppak{p}. This means that there exists a Turing machine $M$ with access to a \ppak{p}-oracle that solves $\Pi$ in polynomial time. Since \imba{p} is \ppak{p}-complete (\cref{thm:complete-problems}), we assume that the oracle provides solutions to \imba{p} instances. Our goal is to show that all the oracle queries can be combined into a single one. Indeed, a Turing reduction that always uses a single oracle query immediately yields a many-one reduction. Thus, by the definition of \ppak{p}, this would yield $\Pi \in \ppak{p}$.

We begin by showing that any \imba{p}-instance can be efficiently transformed into an instance that has a particular form, namely: the starting node has imbalance $+1$ (in-degree $0$ and out-degree $1$), and any solution has imbalance $-1$ (in-degree $1$ and out-degree $0$). This can be achieved by the following steps:
\begin{enumerate}
    \item Ensure that all vertices have in- and out-degree at most $p$ (by splitting vertices into multiple copies).
    \item Ensure that any unbalanced vertex has in- or out-degree $0$ (by creating a copy that will take all the edges that yield the imbalance).
    \item Since $p$ is prime, we can ensure that the starting vertex has imbalance $+1$.
    \item Ensure that all vertices that have imbalance $\neq 0 \mod p$, actually have imbalance $+1$ or $-1$ (by splitting every such vertex into $p$ vertices, each getting at most one edge).
    \item Transform every solution that has imbalance $+1$ into $p-1$ solutions with imbalance $-1$ instead (by pointing to $p-1$ new vertices).
\end{enumerate}

From now on we assume that all \imba{p}-instances have this form. Given an instance $I$ of problem $\Pi$, let $(G_1^I,s_1^I)$ denote the first oracle query made by $M$ on input $I$, where $G_1^I$ is the \imba{p} graph (represented implicitly by circuits) and $s_1^I$ is the starting vertex. From now on we omit the superscript $I$ for better readability. For any solution $t_1$ to $(G_1,s_1)$, let $(G_2(t_1),s_2(t_1))$ be the second oracle query made by $M$, if the first query returned $t_1$. We construct a big graph $G$ that contains a copy of $G_1$ and a copy of $G_2(t_1)$ for each solution $t_1$ of $(G_1,s_1)$. A vertex $u$ in $G_2(t_1)$ is represented as $(t_1,u)$ in $G$. For each such $t_1$, we add an edge from $t_1$ to $(t_1,s_2(t_1))$. Note that these two vertices are now balanced. Thus, the instance $(G,s_1)$ has the following property: all solutions are of the form $(t_1,t_2)$, where $t_1$ is a solution to $(G,s_1)$, and $t_2$ is a solution to $(G_2(t_1),s_2(t_1))$. The straightforward generalisation of this construction for a polynomial number of queries (instead of 2), yields a graph $G$ such that any solution yields consistent query answers for a complete run of $M$ on input $I$. Thus, we obtain a Turing reduction that only needs to make one oracle query and then simulates $M$ with these query answers.

It remains to show that this graph $G$ can be constructed in polynomial time from $I$, i.e., we can efficiently construct circuits that compute the edges incident on any given node. This is easy to see, because any node contains enough information to simulate a run of $M$ up to the point that is needed to determine the neighbours in $G$. We omit the full details, since the formal arguments are analogous to the ones in the corresponding proofs in~\cite{buss2012propositional,johnson2011thesisNPsearch}.

\subsection*{Acknowledgements}

I would like to thank Aris Filos-Ratsikas and Paul Goldberg for helpful discussions, as well as an anonymous reviewer for suggestions that helped improve the presentation of the paper. This work was supported by an EPSRC doctoral studentship (Reference 1892947).

\bibliography{ppa-k}

\appendix

\section{Technical Lemmas for \texorpdfstring{\cref{thm:ppakl-char}}{Theorem~\ref*{thm:ppakl-char}}}\label{app:sec:char-lemmas}

The proof ideas from~\cite{beame1996counting} are used to construct some of these reductions.

\begin{lemma}\label{lem:ppa-krl_in_ppaklmodk}
Let $k \geq 2$ and $r \geq 1$. For all $0 < \ell < kr$ with $\ell \neq 0 \mod k$, it holds that $\ppakl{kr}{\ell} \subseteq \ppakl{k}{(\ell \mod k)}$.
\end{lemma}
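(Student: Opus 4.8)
The plan is to reduce \groupl{kr}{\ell} to \groupl{k}{(\ell \bmod k)} by a direct combinatorial transformation of the partition, exploiting the fact that the complete problems for \ppakl{k}{\ell} are the \group{} problems (\cref{thm:complete-problems}). Recall that an instance of \groupl{kr}{\ell} is a pair $(C,m)$ on a set $\{0,1\}^n$ with $C$ fixing all $x<m$, $2^n-m\equiv\ell \pmod{kr}$, and the task is to find a subset of the induced partition whose size does not divide $kr$ (apart from the trivial fixed elements), or a point with $C^{kr}(x)\neq x$. Writing $\ell_0 = \ell \bmod k$ (note $\ell_0\neq 0$ since $\ell\not\equiv 0\pmod k$), I want to produce an instance $(C',m')$ of \groupl{k}{\ell_0}: the key point is that $2^n - m \equiv \ell \equiv \ell_0 \pmod{k}$ already, so the base set size is automatically correct modulo $k$ and no padding is needed. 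The heart of the reduction is to turn each subset $A$ of the $C$-partition into $\lceil |A|/k\rceil$ subsets of size $k$ (one possibly smaller), by cutting the orbit of $C$ into consecutive blocks of $k$ elements in the canonical lexicographic cyclic order that $C$ already provides.

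Concretely, I would first apply the standard preprocessing (mentioned in the excerpt just before the proof of \cref{thm:complete-problems}) so that wlog $C^{kr}(x)=x$ for all $x$, i.e. every subset has size dividing $kr$. Now for a subset of size $d\mid kr$: if $k\mid d$, I want to split it into $d/k$ blocks of size exactly $k$; if $k\nmid d$, then $d$ already fails to divide $k$, so I should simply keep the subset intact — any such subset is directly a solution, but more to the point I must make sure the new circuit $C'$ on such a subset produces a subset of size $d$ with $d\nmid k$, which will be a valid solution of the target instance, and $C'^k$ need not fix its points (that's fine: $C^k(x)\neq x$ on such a block is itself an allowed solution type of \groupl{k}{\ell_0}). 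To implement the cutting, I index the orbit of each $x$ by its position in the cycle $x, C(x), C^2(x),\dots$; from any element I can compute its whole orbit in polynomial time (the orbit has size $\le kr$, a constant), hence I can compute which block of $k$ consecutive elements it falls into and define $C'(x)$ to be the next element in that block, cyclically. Finally I must handle the trivial part: the elements $x<m$ are fixed by $C$; I keep $C'(x)=x$ for $x<m'$ where $m'=m$ works directly since $2^n-m'\equiv\ell_0\pmod k$ and the remaining $2^n-m$ elements are exactly those on which I performed the block-cutting.

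I then need to check correctness: any solution of the constructed \groupl{k}{\ell_0} instance yields a solution of the original \groupl{kr}{\ell} instance. A point with $C'^k(x)\neq x$ lies in an orbit-block of the form described; following the construction backward, either it lies in a subset of $C$ of size not divisible by $k$ (which is already an original solution), or the block has size exactly $k$, in which case $C'^k(x)=x$ — contradiction; so this case always pulls back. A subset of $C'$ of size $d'\nmid k$, $d'\neq k$: by construction every block has size exactly $k$ except on subsets of $C$ whose size is not a multiple of $k$, and on those the "block" has size $d\nmid k$ with $d\neq kr$ (since $d\mid kr$ and $k\nmid d$ forces $d<kr$), which is an original solution. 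So in all cases we recover a solution, and the reduction maps it back in polynomial time.

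The main obstacle I anticipate is bookkeeping around the edge cases in the block-cutting — in particular ensuring that when $d\mid kr$ but $k\nmid d$ the transformed subset is genuinely a valid target solution and that the circuit $C'$ is well-defined and polynomial-time computable on it, and double-checking that no spurious small subsets are created near the "trivial" boundary $x<m$. I'd also need the orbit-position computation to be canonical (independent of the starting element), which it is: define the position of $x$ as the number of $C$-iterations needed to reach $x$ from the lexicographically smallest element of its orbit. None of this is deep; it is essentially the same orbit-chopping idea already used in the excerpt (e.g. in the proof of \cref{thm:pmod-char} when splitting subsets of size $2k$ into two of size $k$), generalised from the ratio $2$ to an arbitrary ratio $r$, and I expect the write-up to be short once the indexing conventions are fixed.
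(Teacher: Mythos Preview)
Your approach is correct. You reduce \groupl{kr}{\ell} to \groupl{k}{(\ell\bmod k)} by chopping each $C$-orbit of size $d\mid kr$ into consecutive blocks of size $k$ (leaving it intact when $k\nmid d$, which already yields an original solution since then $d\neq kr$). The paper instead reduces \bipal{kr}{\ell} to \bipal{k}{(\ell\bmod k)}: it splits each vertex of the bipartite graph into $r$ copies and distributes its at most $kr$ incident edges among them in lexicographic order, so that exactly one copy of the trivial vertex gets degree $\ell\bmod k$ and every other copy of every vertex gets degree $0$ or $k$. The underlying combinatorial idea is identical --- cut a size-$\le kr$ object into pieces of size $k$ with one leftover --- but the bipartite version is shorter, since distributing edges to indexed copies needs no orbit enumeration or canonical representative; the paper's proof is three sentences. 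Your route works equally well, and the correctness analysis you sketch is sound (modulo the slip ``$d'\nmid k$'' in your last paragraph, where you mean the first solution type $d'\mid k$, $d'\neq k$).
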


\begin{proof}
Consider any instance of \bipal{kr}{\ell}. Split every vertex $v$ into $r$ versions $v_1, \dots, v_r$ and assign every incident edge to exactly one of these versions, e.g., by ordering the neighbours in increasing lexicographic order. Then exactly one version of the known starting vertex will have degree $\ell \mod k$ and all its other versions will have degree $0$ or $k$. In the new graph, any vertex that has degree not in $\{0,k\}$ (apart from this one version of the starting vertex) will immediately yield a solution of the original instance. Thus, we have reduced to an instance of \bipal{k}{(\ell \mod k)}.
\end{proof}

\begin{lemma}\label{lem:ppa-kl_in_ppa-krlr}
Let $k \geq 2$ and $r \geq 1$. For any $0 < \ell < k$ it holds that $\ppakl{k}{\ell} \subseteq \ppakl{kr}{\ell r}$.
\end{lemma}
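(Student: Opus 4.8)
The plan is to reduce an instance of \bipal{k}{\ell} to an instance of \bipal{kr}{\ell r} by replacing every edge of the original bipartite graph with a bundle of $r$ parallel edges (and then cleaning up the multi-edges). Concretely, given an instance with bipartite graph $G$, I would keep the same vertex set (padded to a power of $2$ on each side if needed) and, whenever $1y \in C(0x)$ in the original instance, declare $1y$ to be a neighbour of $0x$ with multiplicity $r$ in the new instance. Since every degree $d$ in $G$ becomes a degree $rd$ in the new graph, a vertex has degree $0$, $k$, or $\ell$ in $G$ exactly when it has degree $0$, $kr$, or $\ell r$ in the new graph; in particular the trivial solution $00^n$ now has degree $\ell r$, as required for \bipal{kr}{\ell r}, and any other vertex with degree $\notin\{0,kr\}$ in the new graph must have had degree $\notin\{0,k\}$ in $G$, hence yields a solution to the original instance.

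The one technical wrinkle is that the new graph has multi-edges, which are not permitted by \cref{def:bipartite-k}. The fix is the same Mitosis-gadget idea used in the proof of \cref{thm:complete-problems} (the $\groupl{k}{\ell} \leq \bipal{k}{\ell}$ reduction), except here the relevant modulus is $kr$, so I would invoke Mitosis gadgets built for $kr$ rather than $k$: for each original edge $\{0x,1y\}$ we only genuinely add one copy, and then use $r-1$ Mitosis gadgets (for modulus $kr$) anchored at $0x$ and $1y$ to raise both degrees by $r-1$ more, without creating any vertex of degree $\notin\{0,kr\}$. Equivalently, since multi-edges here arise in a very controlled way (always exactly $r$ parallel copies of a single edge), one can also just route them through a small fixed-size gadget; either way the construction is local and polynomial-time computable, so the circuit for the new instance can be built efficiently from $C$.

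The last point to check is that this is a genuine many-one reduction in the formal sense: I need the recovery map $g$ sending a solution of the new \bipal{kr}{\ell r} instance back to a solution of the original. Solutions introduced by the Mitosis gadgets are precisely the two designated degree-$1$ vertices, which after anchoring become existing vertices of full degree, so they are never solutions; hence every solution of the new instance is either an inconsistent-edge certificate (which maps back to an inconsistent edge of $G$, after the standard pre-processing ensuring edges are well-defined) or a vertex $u\neq 00^n$ with degree $\notin\{0,kr\}$, and by the degree-scaling observation above such a $u$ has degree $\notin\{0,k\}$ in $G$, i.e.\ $u$ is already a solution to the original instance and $g$ is essentially the identity. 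I expect the main obstacle to be purely bookkeeping: being careful that the Mitosis gadgets are instantiated for the correct modulus $kr$ and that padding to powers of $2$ does not accidentally alter any relevant degree — but conceptually the reduction is just "multiply every edge multiplicity by $r$", which is clearly total and correctness-preserving.
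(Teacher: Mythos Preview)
Your proposal is correct and follows essentially the same approach as the paper: multiply every edge's multiplicity by $r$ so that all degrees scale by $r$, and then eliminate the resulting multi-edges using the Mitosis gadgets (for modulus $kr$). The paper's own proof is just a terser version of exactly this argument.
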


\begin{proof}
Consider any instance of \bipal{k}{\ell}. Assign weight $r$ to every edge. Clearly, the starting vertex now has degree $\ell r$ and any other vertex with degree not in $\{0,kr\}$ yields a solution to the original instance. Finally, note that we can remove the weights without changing the degrees and adding any new solutions by using the ``mitosis'' gadgets.
\end{proof}

\begin{lemma}\label{lem:ppa-kll_in_ppa-k1}
Let $k \geq 2$ and $\ell \geq 1$. Then $\ppakl{k\ell}{\ell} \subseteq \ppakl{k}{1}$.
\end{lemma}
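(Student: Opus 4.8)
The plan is to argue via the set-partition formulation: by \cref{thm:complete-problems} it suffices to exhibit a polynomial-time reduction $\groupl{k\ell}{\ell} \le \groupl{k}{1}$, and the partition/counting picture makes the grouping argument most transparent. As a sanity check, the case $\gcd(k,\ell)=1$ is immediate from the lemmas already in the appendix: \cref{lem:ppa-krl_in_ppaklmodk} (with $r=\ell$) gives $\ppakl{k\ell}{\ell}\subseteq\ppakl{k}{(\ell\bmod k)}$ since $\ell\not\equiv 0\bmod k$, and then \cref{lem:gcdkell1-div-ell2} gives $\ppakl{k}{(\ell\bmod k)}=\ppakl{k}{\gcd(k,\ell)}=\ppakl{k}{1}$. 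So the real content is the case $\gcd(k,\ell)>1$, which needs a genuine construction.

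The guiding observation is arithmetic: an instance of $\groupl{k\ell}{\ell}$ has active-set size $2^n-m\equiv\ell\bmod k\ell$, hence $2^n-m=\ell(1+kt)$ for some $t\ge 0$, and $1+kt\equiv 1\bmod k$; so if the active set could be partitioned canonically into $\ell$-element ``blocks'', we would obtain exactly $1+kt$ of them, precisely the right count for a $\groupl{k}{1}$ instance. Concretely, given the circuit $C$, for each active $x$ one computes its orbit $O(x)$ (size $\le k\ell$, hence constant work), takes the orbit minimum $\mu(x)$ as a canonical label, and assigns $x$ to the block $(\mu(x),\lfloor j(x)/\ell\rfloor)$, where $j(x)$ is the position of $x$ in the cyclic order of $O(x)$ started at $\mu(x)$; this is computable by a polynomial-size circuit. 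In the constructed $\groupl{k}{1}$ instance the elements are the blocks, two blocks lie in the same subset iff they come from the same orbit, and the new circuit $C'$ cycles through the blocks of an orbit in index order. An orbit of size exactly $k\ell$ contributes exactly $k$ blocks, a legal subset of size $k$; a ``bad'' orbit (size a proper divisor of $k\ell$) contributes fewer than $k$ blocks, a subset of size $\neq k$, which is a solution of the $\groupl{k}{1}$ instance and directly names the bad orbit, hence a solution of the original instance. Fitting the block-labels (pairs of an element of $\{0,1\}^n$ and an index in $\{0,\dots,k-1\}$) into $\{0,1\}^{n'}$, with invalid labels placed in the trivial prefix and a bijection onto a suffix of the integers, is routine bookkeeping of the kind already used in \cref{sec:complete-problems-proof}.

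The one obstruction is that an orbit whose size is \emph{not divisible by $\ell$} (necessarily of size $\neq k\ell$, hence itself a solution) does not split cleanly into $\ell$-blocks, and, worse, the presence of such orbits destroys the exact count $1+kt$ that makes the target instance syntactically well-formed. I expect this to be the crux. I would handle it by a normalization step preceding the grouping, following the technique of Beame et al.~\cite{beame1996counting} for the analogous counting principles: rearrange (some number of copies of) the instance so that every surviving non-trivial orbit has size divisible by $\ell$, while preserving the residue $\equiv\ell\bmod k\ell$ of the active-set size and ensuring that every orbit removed or altered can be traced back to a genuine solution of the original. Routing the reduction through $\hyperl{k}{1}$ (also $\ppakl{k}{1}$-complete) rather than $\groupl{k}{1}$ provides extra slack, since a ``deficient hyperedge'' is already a legal solution, so leftover fragments of such orbits can simply be flagged rather than fully repaired. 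This normalization is where the bulk of the technical work lies; once it is in place, the grouping argument above finishes the reduction, and combining it with the symmetric reductions of \cref{sec:complete-problems-proof} yields the claimed inclusion.
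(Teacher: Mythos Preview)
Your blocking idea---split each orbit into $\ell$-chunks and let the chunks be the new ground set---does not produce a \emph{syntactically} valid $\groupl{k}{1}$ instance. The obstacle is not just that an orbit of size $d$ with $\ell \nmid d$ fails to split cleanly; it is that the \emph{number of blocks} depends on the global orbit structure of $C$, which you cannot compute in polynomial time. A $\groupl{k}{1}$ instance must come with an explicit $m'$ satisfying $2^{n'}-m' \equiv 1 \bmod k$, and you have no way to write down such an $m'$ without knowing how many orbits there are. Your proposed fix---``rearrange copies of the instance so that every surviving non-trivial orbit has size divisible by $\ell$''---is not something you can do: bad orbits are solutions, there may be exponentially many of them, and you cannot locate, remove, or repair them globally. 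This is also \emph{not} what Beame et al.\ do, so invoking their name does not close the gap.

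What Beame et al.\ actually do, and what the paper follows, is a completely different construction that sidesteps the counting problem: the new ground set is the set $S$ of all $\ell$-element \emph{subsets} of $\{m,\dots,2^n-1\}$. After a cheap preliminary padding to ensure $2^n-m \equiv \ell \bmod (k\cdot\ell!)$, one has $|S|=\binom{2^n-m}{\ell}\equiv 1 \bmod k$, and this size is \emph{computable from $n,m,\ell$ alone}, independent of the orbit structure---that is the whole point. Two $\ell$-subsets are declared equivalent when they draw the same multiset of elements from each $C$-orbit; an equivalence class has size $\prod_i \binom{k\ell}{\alpha_i}$ (assuming all relevant orbits have size $k\ell$), which Beame et al.\ show is always divisible by $k$, so each class can be chopped into $k$-groups by an explicit local rule. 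Any failure of this chopping exposes an orbit of size $\neq k\ell$, i.e.\ a solution of the original instance. The key idea you are missing is precisely this passage to $\ell$-subsets; your chunking-within-orbits picture does not lead there.
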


\begin{proof}
We reduce \groupl{k\ell}{\ell} to \groupl{k}{1} by adapting the proof in~\cite[Lemma 2.3]{beame1996counting} to obtain a reduction. Consider an instance of \groupl{k\ell}{\ell}, i.e., a partition of the set of integers $[m,2^n-1]$ given by a circuit $C$ for some $m$ such that $2^n-m = \ell \mod k\ell$. This means that there exists some integer $\alpha \in [0,2^{n-1}]$ such that $2^n-m=\ell+\alpha k \ell$. Clearly, there exists some integer $\beta \in [0,(\ell-1)!-1]$ such that $\alpha+\beta = 0 \mod (\ell-1)!$, which implies $2^n-m+\beta k \ell = \ell \mod (k \cdot \ell!)$ (if $\ell \leq 2$, then this holds with $\beta = 0$). Thus, if we add $\beta$ sets of size $k \ell$, the size of the ground set will be $= \ell \mod (k \cdot \ell!)$. Assuming that $n$ is large enough such that $2^n \geq k \cdot \ell! \geq \beta k \ell$, we can achieve this by letting $n'=n+1$, $m'=m+2^n-\beta k \ell$ and extending $C$ to also include the additional $\beta$ sets of size $k \ell$. It is easy to see that this can be done efficiently and will yield a partition of $[m',2^{n+1}-1]$ such that any mistake immediately yields a mistake in the original partition. Thus, we can assume without loss of generality that $2^n-m = \ell \mod (k \cdot \ell!)$.

Let $S$ denote the set of all subsets of $\{m,m+1, \dots, 2^n-1\}$ of size exactly $\ell$. Note that $|S|= \binom{2^n-m}{\ell} = \prod_{i=0}^{\ell-1} \frac{2^n-m-i}{\ell-i} = 1 \mod k$, since $2^n-m-i = \ell-i \mod k(\ell-i)$. We will now describe how to construct a partition of $S$ into sets of size $k$ such that any mistake yields a solution of the original instance.

Recall that we can assume that $C^{k\ell}(x)=x$ for all $x$. Thus, every $x$ yields an orbit $O(x)=\{x,C(x),C^2(x), \dots, C^{k\ell-1}(x)\}$ of size at most $k\ell$. In particular, we can pick the lexicographically smallest element of every orbit to be its representative. Denote by $R(x)$ the representative of the orbit containing $x$. We then have that $R(x)=R(y)$, if and only if $x$ and $y$ lie in the same orbit, i.e., in the same set in the original partition. For $a,b \in S$ we write $a \equiv b$ if $a$ and $b$ contain exactly the same number of elements from each set of the original partition. This can be checked efficiently by computing the representative of each element in $a$, ordering these lexicographically (with repetitions), doing the same for $b$ and checking if the two lists are identical. This is an equivalence relation and we denote the equivalence class of $a \in S$ under $\equiv$ by $[a]$.

For any $a \in S$ there exist distinct representatives $x_1, \dots, x_s$, $s \leq \ell$, and $\alpha_1, \dots, \alpha_s \geq 1$ with $\sum_{i=1}^s \alpha_i = \ell$ such that $a$ contains exactly $\alpha_i$ elements from the orbit represented by $x_i$, for all $i$. Thus, the size of $[a]$ is exactly $\prod_{i=1}^s \binom{k\ell}{\alpha_i}$, assuming that the orbits of $x_1,\dots,x_s$ all have size $k\ell$, i.e., do not yield a solution to the original problem. It was shown in the proof of~\cite[Lemma 2.3]{beame1996counting} that this quantity is a multiple of $k$. Thus, the equivalence class of $a$ can be perfectly partitioned into sets of size $k$. We now describe a way to do this explicitly and efficiently. Assume that the representatives $x_1,\dots, x_s$ are in increasing lexicographic order. Find the smallest index $i$ such that $k$ divides $\binom{k\ell}{\alpha_i}$. Let $F$ denote an arbitrary fixed efficient bijection between $\{0, \dots, \binom{k\ell}{\alpha_i}-1\}$ and $\binom{O(x_i)}{\alpha_i}$, where this denotes the set of all subsets of $O(x_i)$ of size exactly $\alpha_i$.

The circuit $C'$ determines the image of $a \in S$ by first computing $x_1, \dots, x_s$ and $\alpha_1, \dots, \alpha_s$ as described above, and determining the smallest index $i$ as explained above. Let $a_i = a \cap O(x_i)$. The circuit outputs
$$(a \setminus a_i) \cup F(\lfloor F^{-1}(a_i)/k \rfloor \cdot k +(F^{-1}(a_i)+1 \mod k)).$$
It is easy to check that as long as $|O(x_j)| = k \ell$ for all $j$, this procedure partitions $[a]$ into sets of size $k$. The last step is to set $m'=2^{n\ell}-|S|$ and construct an efficient bijection between $S$ and $\{2^{n\ell}-|S|,\dots,2^{n\ell}-1\}$, which is easy to do.
\end{proof}

\begin{lemma}\label{lem:ppa-k1_in_ppa-r1}
Let $k_1,k_2 \geq 2$. If all prime factors of $k_2$ also divide $k_1$, then $\ppakl{k_1}{1} \subseteq \ppakl{k_2}{1}$.
\end{lemma}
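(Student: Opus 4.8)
The statement to prove is \cref{lem:ppa-k1_in_ppa-r1}: if every prime factor of $k_2$ divides $k_1$, then $\ppakl{k_1}{1} \subseteq \ppakl{k_2}{1}$. My plan is to reduce \groupl{k_1}{1} to \groupl{k_2}{1}, following the same recipe used in \cref{lem:ppa-kll_in_ppa-k1} and alluded to at the end of that lemma and in the proof overview of \cref{thm:ppakl-structure}. The key point is that the hypothesis lets us write $k_2 = p_1^{a_1}\cdots p_d^{a_d}$ with each $p_i \mid k_1$, and the only obstruction to the reduction is making various factorials and binomial coefficients behave correctly modulo $k_2$; arranging the ground-set size to be $\equiv 1 \bmod (k_1 \cdot N!)$ for a suitable $N$ will be the engine, exactly as in \cref{lem:ppa-kll_in_ppa-k1}.

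\textbf{First steps.} Start from an instance $(C,m)$ of \groupl{k_1}{1} on $\{0,1\}^n$, so $2^n - m \equiv 1 \bmod k_1$ and (wlog, after the usual preprocessing) $C^{k_1}(x) = x$ for all $x$, giving orbits of size dividing $k_1$. The plan is to encode each subset of $\{0,1,\dots,2^n-1\}$ of some fixed size into a single element of a larger ground set, and have the new partition group such subsets together into classes of size a multiple of $k_2$. Concretely, let $S$ be the collection of all $\ell'$-element subsets of $\{m,\dots,2^n-1\}$ for an appropriately chosen $\ell'$; one picks representatives $R(x)$ of orbits (lexicographically smallest element), declares $a \equiv b$ when $a$ and $b$ contain the same number of elements from each original orbit, and shows $|S| \equiv 1 \bmod k_2$ and every equivalence class has size a multiple of $k_2$, so the classes can be explicitly partitioned into blocks of size $k_2$ by an efficiently computable cyclic shift (as the circuit $C'$ does in \cref{lem:ppa-kll_in_ppa-k1}). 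A solution to the resulting \groupl{k_2}{1} instance — a block that fails to have size $k_2$ — can only arise if some orbit of the original instance has size $< k_1$, i.e.\ yields a solution to $(C,m)$.

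\textbf{The main obstacle.} The delicate part is the divisibility bookkeeping: one must choose $\ell'$ and pre-inflate the ground-set size (by adding a bounded number of size-$k_1$ blocks, as in \cref{lem:ppa-kll_in_ppa-k1}) so that $|S| = \binom{2^n-m}{\ell'} \equiv 1 \bmod k_2$ and so that the size of each equivalence class $[a]$, namely a product $\prod_i \binom{k_1 \cdot (\text{orbit size factor})}{\alpha_i}$ with $\sum_i \alpha_i = \ell'$, is always divisible by $k_2$ whenever all orbits have full size $k_1$. Because $k_2$ is not assumed prime, the clean prime-by-prime argument of \cref{lem:ppa-kll_in_ppa-k1} needs to be run for each prime power $p_i^{a_i} \| k_2$ simultaneously via CRT; ensuring $2^n - m \equiv 1 \bmod (k_1 \cdot M!)$ for a large enough $M$ (depending only on $k_1,k_2$) makes each binomial coefficient $\binom{2^n-m-i}{\ell'-i}$ congruent to $\binom{\ell'}{\ell'-i}$-type factors mod $k_2$ and forces the required $p_i$-adic valuations by a Kummer/Legendre-type count. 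Getting $\ell'$ large enough that the class sizes pick up valuation at least $a_i$ in each $p_i$, while keeping $\ell'$ bounded by a constant so the encoding stays polynomial-size, is exactly the crux, and is handled by taking $\ell'$ a suitable multiple of $k_2$ (or of $\mathrm{rad}(k_2)$ times enough to cover the exponents) and invoking the counting identity from \cite[Lemma 2.3]{beame1996counting} separately at each prime. The remaining details — the bijection between $S$ and a final segment of $\{0,\dots,2^{n\ell'}-1\}$, and checking $C'$ is polynomial-time — are routine and identical in spirit to \cref{lem:ppa-kll_in_ppa-k1}.
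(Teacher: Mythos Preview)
Your plan mirrors \cref{lem:ppa-kll_in_ppa-k1} too closely, and the divisibility claim at its heart fails. Take $k_1=6$ and $k_2=4$ (the hypothesis holds: the only prime factor $2$ of $k_2$ divides $k_1$). Orbits have size~$6$, and you need some fixed $\ell'$ such that $\prod_i \binom{6}{\alpha_i}$ is divisible by $4$ for \emph{every} composition $\sum_i \alpha_i = \ell'$ with $1 \le \alpha_i \le 6$. But the $2$-adic valuations of $\binom{6}{0},\dots,\binom{6}{6}$ are $0,1,0,2,0,1,0$; every even $\ell'$ admits a partition into parts equal to $2$ (valuation sum $0$), and every odd $\ell'$ admits a partition with one part equal to $1$ and the rest even (valuation sum $1$). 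So no $\ell'$ works. The identity you invoke from \cite[Lemma~2.3]{beame1996counting} is the statement $k \mid \prod_i \binom{k\ell}{\alpha_i}$ when $\sum_i \alpha_i = \ell$; it crucially requires the orbit size to be $k\ell$, not $k$, and does not transfer to your situation where orbits have size $k_1$ and the target modulus $k_2$ is unrelated to the subset size.

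The paper's proof (adapting \cite[Lemma~2.5]{beame1996counting}, not Lemma~2.3) avoids this entirely. It first observes that $k_2 \mid k_1^\ell$ for some constant $\ell$ and uses \cref{lem:ppa-krl_in_ppaklmodk} to reduce to showing $\ppakl{k_1}{1} \subseteq \ppakl{k_1^\ell}{1}$. For that, it takes a Cartesian power: the new ground set is $\{m,\dots,2^n-1\}^r$ for a constant $r \ge \ell$ with $(2^n-m)^r \equiv 1 \bmod k_1^\ell$ (Euler), and the new circuit applies $C$ simultaneously to the first $i$ coordinates, where $i \le \ell$ counts how many leading coordinates are orbit representatives. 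When all original orbits have size $k_1$, every new orbit has size \emph{exactly} $k_1^\ell$, so no binomial-coefficient divisibility analysis is needed at all---the obstacle you identified as ``the crux'' simply does not arise.
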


\begin{proof}
Similarly to our proof of \cref{lem:ppa-kll_in_ppa-k1}, we adapt the proof of the corresponding statement for the counting formulas from Beame et al.~\cite[Lemma 2.5]{beame1996counting} in order to obtain a polynomial-time reduction.

Since all prime factors of $k_2$ divide $k_1$, there exists some $\ell$ (bounded by a constant, e.g., $k_2$) such that $k_2$ divides $k_1^\ell$. From \cref{lem:ppa-krl_in_ppaklmodk} we know that $\ppakl{k_1^\ell}{1} \subseteq \ppakl{k_2}{1}$. Thus, it suffices to show that $\ppakl{k_1}{1} \subseteq \ppakl{k_1^\ell}{1}$. We write $k=k_1$ from now on.

Consider an instance $(C,m)$ of \groupl{k}{1}. Since $2^n-m = 1 \mod k$, it follows that there exists some $r$ such that $(2^n-m)^r = 1 \mod k^\ell$ by Euler's totient theorem. Pick such an $r \geq \ell$ -- any large enough multiple of the totient $\phi(k^\ell)$ will do -- and note that $r$ is bounded by some constant, since both $\ell$ and $\phi(k^\ell)$ are.

Assume without loss of generality that $C^k(x) = x$ for all $x$. We construct a partition of $\{m, \dots, 2^n-1\}^r$ explicitly as follows. For any $(x_1, \dots, x_r) \in \{m, \dots, 2^n-1\}^r$, let $i \in \{1, \dots \ell\}$ denote the largest index such that $x_j$ is the lexicographically smallest element in the orbit of $x_j$ under $C$ (i.e., $x_j$ is the representative of $O(x_j)$), for all $j < i$. If there is no such $x_j$, set $i=1$. If $i > \ell$, set $i=\ell$. The circuit $C'$ computes $C'(x_1, \dots, x_r) = (C(x_1),C(x_2), \dots, C(x_i), x_{i+1}, \dots x_r)$. It is easy to see that if the orbits $O(x_j)$ under $C$ all have size $k$, then this yields an orbit of size $k^\ell$. Thus, any orbit in the new instance that has size different from $k^\ell$ immediately yields some orbit of the original instance that does not have size $k$.

The final step is to set $m'=2^{nr}-(2^n-m)^r$ and construct an efficient bijection between $\{m', \dots, 2^{nr}-1\}$ and $\{m, \dots, 2^n-1\}^r$, which is easy to do.
\end{proof}

\end{document}